\titlespacing{\section}{0pt}{*0}{*0}
\titlespacing{\subsection}{0pt}{*0}{*0}
\titlespacing{\subsubsection}{0pt}{*0}{*0}
\newcommand\nc\newcommand
\nc{\bb}[1]{\mathbb{#1}}
\renewcommand{\cal}[1]{\mathcal{#1}}
\DeclarePairedDelimiter{\set}{\lbrace}{\rbrace}
\DeclarePairedDelimiter{\br}{\lparen}{\rparen}
\nc\bfa{{\bf{a}}}\nc\bfA{{\boldsymbol A}}\nc\cA{{\cal A}} \nc\fA[1]{A\br*{#1}} \nc\fa[1]{a\br*{#1}}  \nc\rmA{\mathrm{A}} \nc\rma{\mathrm{a}}
\nc\bfb{{\bf{b}}}\nc\bfB{{\boldsymbol B}}\nc\cB{{\cal B}} \nc\fB[1]{B\br*{#1}} \nc\fb[1]{b\br*{#1}}  \nc\rmB{\mathrm{B}} \nc\rmb{\mathrm{b}}
\nc\bfc{{\bf{c}}}\nc\bfC{{\boldsymbol C}}\nc\cC{{\cal C}} \nc\fC[1]{C\br*{#1}} \nc\fc[1]{c\br*{#1}}  \nc\rmC{\mathrm{C}} \nc\rmc{\mathrm{c}}
\nc\bfd{{\bf{d}}}\nc\bfD{{\boldsymbol D}}\nc\cD{{\cal D}} \nc\fD[1]{D\br*{#1}} \nc\fd[1]{d\br*{#1}}  \nc\rmD{\mathrm{D}} \nc\rmd{\mathrm{d}}
\nc\bfe{{\bf{e}}}\nc\bfE{{\boldsymbol E}}\nc\cE{{\cal E}} \nc\fE[1]{E\br*{#1}} \nc\fe[1]{e\br*{#1}}  \nc\rmE{\mathrm{E}} \nc\rme{\mathrm{e}}
\nc\bff{{\bf{f}}}\nc\bfF{{\boldsymbol F}}\nc\cF{{\cal F}} \nc\fF[1]{F\br*{#1}} \nc\ff[1]{f\br*{#1}}  \nc\rmF{\mathrm{F}} \nc\rmf{\mathrm{f}}
\nc\bfg{{\bf{g}}}\nc\bfG{{\boldsymbol G}}\nc\cG{{\cal G}} \nc\fG[1]{G\br*{#1}} \nc\fg[1]{g\br*{#1}}  \nc\rmG{\mathrm{G}} \nc\rmg{\mathrm{g}}
\nc\bfh{{\bf{h}}}\nc\bfH{{\boldsymbol H}}\nc\cH{{\cal H}} \nc\fH[1]{H\br*{#1}} \nc\fh[1]{h\br*{#1}}  \nc\rmH{\mathrm{H}} \nc\rmh{\mathrm{h}}
\nc\bfi{{\bf{i}}}\nc\bfI{{\boldsymbol I}}\nc\cI{{\cal I}} \nc\fI[1]{I\br*{#1}} \nc\rmI{\mathrm{I}} \nc\rmi{\mathrm{i}}
\nc\bfj{{\bf{j}}}\nc\bfJ{{\boldsymbol J}}\nc\cJ{{\cal J}} \nc\fJ[1]{J\br*{#1}} \nc\fj[1]{j\br*{#1}} \nc\rmJ{\mathrm{J}} \nc\rmj{\mathrm{j}}
\nc\bfk{{\bf{k}}}\nc\bfK{{\boldsymbol K}}\nc\cK{{\cal K}} \nc\fK[1]{K\br*{#1}} \nc\fk[1]{k\br*{#1}} \nc\rmK{\mathrm{K}} \nc\rmk{\mathrm{k}}
\nc\bfl{{\bf{l}}}\nc\bfL{{\boldsymbol L}}\nc\cL{{\cal L}} \nc\fL[1]{L\br*{#1}} \nc\fl[1]{l\br*{#1}} \nc\rmL{\mathrm{L}} \nc\rml{\mathrm{l}}
\nc\bfm{{\bf{m}}}\nc\bfM{{\boldsymbol M}}\nc\cM{{\cal M}} \nc\fM[1]{M\br*{#1}} \nc\fm[1]{m\br*{#1}} \nc\rmM{\mathrm{M}} \nc\rmm{\mathrm{m}}
\nc\bfn{{\bf{n}}}\nc\bfN{{\boldsymbol N}}\nc\cN{{\cal N}} \nc\fN[1]{N\br*{#1}} \nc\fn[1]{n\br*{#1}} \nc\rmN{\mathrm{N}} \nc\rmn{\mathrm{n}}
\nc\bfo{{\bf{o}}}\nc\bfO{{\boldsymbol O}}\nc\cO{{\cal O}} \nc\fO[1]{O\br*{#1}} \nc\fo[1]{o\br*{#1}} \nc\rmO{\mathrm{O}} \nc\rmo{\mathrm{o}}
\nc\bfp{{\bf{p}}}\nc\bfP{{\boldsymbol P}}\nc\cP{{\cal P}} \nc\fP[1]{P\br*{#1}} \nc\fp[1]{p\br*{#1}} \nc\rmP{\mathrm{P}} \nc\rmp{\mathrm{p}}
\nc\bfq{{\bf{q}}}\nc\bfQ{{\boldsymbol Q}}\nc\cQ{{\cal Q}} \nc\fQ[1]{Q\br*{#1}} \nc\fq[1]{q\br*{#1}} \nc\rmQ{\mathrm{Q}} \nc\rmq{\mathrm{q}}
\nc\bfr{{\bf{r}}}\nc\bfR{{\boldsymbol R}}\nc\cR{{\cal R}} \nc\fR[1]{R\br*{#1}} \nc\fr[1]{r\br*{#1}} \nc\rmR{\mathrm{R}} \nc\rmr{\mathrm{r}}
\nc\bfs{{\bf{s}}}\nc\bfS{{\boldsymbol S}}\nc\cS{{\cal S}} \nc\fS[1]{S\br*{#1}} \nc\fs[1]{s\br*{#1}} \nc\rmS{\mathrm{S}} \nc\rms{\mathrm{s}}
\nc\bft{{\bf{t}}}\nc\bfT{{\boldsymbol T}}\nc\cT{{\cal T}} \nc\fT[1]{T\br*{#1}} \nc\ft[1]{t\br*{#1}} \nc\rmT{\mathrm{T}} \nc\rmt{\mathrm{t}}
\nc\bfu{{\bf{u}}}\nc\bfU{{\boldsymbol U}}\nc\cU{{\cal U}} \nc\fU[1]{U\br*{#1}} \nc\fu[1]{u\br*{#1}} \nc\rmU{\mathrm{U}} \nc\rmu{\mathrm{u}}
\nc\bfv{{\bf{v}}}\nc\bfV{{\boldsymbol V}}\nc\cV{{\cal V}} \nc\fV[1]{V\br*{#1}} \nc\fv[1]{v\br*{#1}} \nc\rmV{\mathrm{V}} \nc\rmv{\mathrm{v}}
\nc\bfw{{\bf{w}}}\nc\bfW{{\boldsymbol W}}\nc\cW{{\cal W}} \nc\fW[1]{W\br*{#1}} \nc\fw[1]{w\br*{#1}} \nc\rmW{\mathrm{W}} \nc\rmw{\mathrm{w}}
\nc\bfx{{\bf{x}}}\nc\bfX{{\boldsymbol X}}\nc\cX{{\cal X}} \nc\fX[1]{X\br*{#1}} \nc\fx[1]{x\br*{#1}} \nc\rmX{\mathrm{X}} \nc\rmx{\mathrm{x}}
\nc\bfy{{\bf{y}}}\nc\bfY{{\boldsymbol Y}}\nc\cY{{\cal Y}} \nc\fY[1]{Y\br*{#1}} \nc\fy[1]{y\br*{#1}} \nc\rmY{\mathrm{Y}} \nc\rmy{\mathrm{y}}
\nc\bfz{{\bf{z}}}\nc\bfZ{{\boldsymbol Z}}\nc\cZ{{\cal Z}} \nc\fZ[1]{Z\br*{#1}} \nc\fz[1]{z\br*{#1}} \nc\rmZ{\mathrm{Z}} \nc\rmz{\mathrm{z}}
\DeclareMathOperator{\Supp}{supp}
\DeclareMathOperator{\Col}{col}
\DeclareMathOperator{\wt}{wt}
\nc\defeq{\coloneqq}
\newcommand{\supp}[1]{\Supp\br*{#1}}
\newcommand{\s}[1]{\mathrm{sign}(#1)}
\newcommand{\calO}[1]{{\cal O}\left(#1\right)}
\DeclarePairedDelimiterX\Set[1]\{\}{#1}
\newcommand\R{{\mathbb R}}
\newtheorem{theorem}{Theorem}
\newtheorem*{theorem*}{Theorem}
\newtheorem{definition}{Definition}
\crefname{definition}{defn.}{defns}
\newtheorem{lemma}[theorem]{Lemma}
\newtheorem*{lemma*}{Lemma}
\newtheorem{corollary}[theorem]{Corollary}
\newtheorem{corollary*}[theorem]{Corollary}
\newtheorem{remark}{\indent Remark}
\newtheorem*{remark*}{Remark}
\newtheorem{fact}[theorem]{Fact}
\theoremstyle{definition}
\newtheorem{property}{Property}
\crefname{Appendix}{Appendix}{Appendices}
\newcommand{\sign}{\textrm{sign}}
\crefname{algocf}{algorithm}{algorithms}
\crefname{definition}{definition}{definitions}
\Crefname{algocf}{Algorithm}{Algorithms}
\Crefname{definition}{Definition}{Definitions}
\title{Superset Technique for Approximate Recovery in One-Bit Compressed Sensing}
\author{%
  Larkin Flodin \\
  University of Massachusetts Amherst\\
  Amherst, MA 01003 \\
  \texttt{lflodin@cs.umass.edu} \\
 \And
 Venkata Gandikota \\
  University of Massachusetts Amherst\\
  Amherst, MA 01003 \\
 \texttt{gandikota.venkata@gmail.com} \\
 \And
Arya Mazumdar \\
  University of Massachusetts Amherst\\
  Amherst, MA 01003 \\
 \texttt{arya@cs.umass.edu} \\
}
\begin{document}

\maketitle

\begin{abstract}
One-bit compressed sensing (1bCS) is a method of signal acquisition under extreme measurement quantization that gives important insights on the limits of signal compression and analog-to-digital conversion. The setting is also equivalent to the problem of learning a sparse hyperplane-classifier. In this paper, we propose a novel approach for signal recovery in nonadaptive 1bCS that matches the sample complexity of the current best methods. We construct 1bCS matrices that are universal - i.e. work for all signals under a model - and at the same time recover very general random sparse signals with high probability. In our approach, we divide the set of samples (measurements) into two parts, and use the first part to recover the superset of the support of a sparse vector. The second set of measurements is then used to approximate the signal within the superset. While support recovery in 1bCS is well-studied, recovery of superset of the support requires fewer samples, and to our knowledge has not been previously considered for the purpose of approximate recovery of signals.
\end{abstract}

\section{Introduction}
\label{sec:intro}
Sparsity is a natural property of many real-world signals. For example, image and speech signals are sparse in the Fourier basis, which led to the  theory of compressed sensing, and more broadly, sampling theory \cite{landau1967sampling,DBLP:journals/tit/Donoho06}. 
In some important multivariate optimization problems with many optimal points, sparsity of the solution is also a measure of `simplicity' and insisting on sparsity is a common method of {\em regularization}~\cite{tibshirani1996regression}.
While recovering sparse vectors from linear measurements is a  well-studied topic, technological advances and increasing data size raises new questions. These include quantized and nonlinear signal acquisition models, such as 1-bit compressed sensing \cite{DBLP:conf/ciss/BoufounosB08}. In 1-bit compressed sensing, linear measurements of a sparse vector are quantized to only 1 bit, e.g. indicating whether the measurement outcome is positive or not, and the task is to recover the vector up to a prescribed Euclidean error with minimum number of measurements. Like compressed sensing, the overwhelming majority of the literature, including this paper, focuses on the nonadaptive setting for the problem.

One of the ways to approximately recover a sparse vector from 1-bit measurements is to use a subset of all the measurements to identify the support of the vector. Next, the remainder of the measurements can be used to approximate the vector within the support. Note that this second set of measurements is also predefined, and therefore the entire scheme is still nonadaptive. Such a method appears in the context of `universal' matrix designs in~\cite{GNJN13,ABK17}. The resulting schemes are the best known, in some sense, but still result in a large gap between the upper and lower bounds for approximate recovery of vectors.

In this paper we take steps to close these gaps by presenting a simple yet powerful idea. Instead of using a subset of the measurements to recover the support of the vector exactly, we propose using a (smaller) set of measurements to recover a superset of the support. The remainder of the measurements can then be used to better approximate the vector within the superset. We also present theoretical results providing a characterization of matrices that would yield universal schemes for all sparse vectors.

 
 \paragraph{Prior Results.}
 \label{sec:related_work}

While the compressed sensing framework was introduced in \cite{DBLP:journals/tit/Donoho06}, it was not until \cite{DBLP:conf/ciss/BoufounosB08} that 1-bit quantization of the measurements was considered as well, to try and combat the fact that taking real-valued measurements to arbitrary precision may not be practical in applications. Initially, the focus was primarily on approximately reconstructing the direction of the signal $\bfx$ (the quantization does not preserve any information about the magnitude of the signal, so all we can hope to reconstruct is the direction). However, in \cite{DBLP:conf/ciss/HauptB11} the problem of support recovery, as opposed to approximate vector reconstruction, was first considered and it was shown that $\calO{k \log n}$ measurements is sufficient to recover the support of a $k$-sparse signal in $\R^n$ with high probability. This was subsequently shown to be tight with the lower bound proven in \cite{AtiaS12}.

All the above results assume that a new measurement matrix is constructed for each sparse signal, and success is defined as either approximately recovering the signal up to error $\epsilon$ in the $\ell_2$ norm (for the approximate vector recovery problem), or exactly recovering the support of the signal (for the support recovery problem), with high probability. Generating a new matrix for each instance is not practical in all applications, which has led to interest in the ``universal'' versions of the above two problems, where a single matrix must work for support recovery or approximate recovery of all $k$-sparse real signals, with high probability.

Plan and Vershynin showed in \cite{DBLP:journals/tit/PlanV13} that both $\calO{\frac{k}{\epsilon^6} \log \frac{n}{k}}$ and $\calO{\frac{k}{\epsilon^5} \log^2 \frac{n}{k}}$ measurements suffice for universal approximate recovery. The dependence on $\epsilon$ was then improved significantly to $\widetilde{\mathcal{O}} \left(\frac{k}{\epsilon} \log \frac{n}{k} \right)$ by both \cite{GNJN13} and \cite{JLBB13}, using separate methods. Gopi et al. \cite{GNJN13} also considered the problem of universal support recovery, and showed that for that problem, $\calO{k^3 \log n}$ measurements is sufficient. They showed as well that if we restrict the entries of the signal to be nonnegative (which is the case for many real-world signals such as images), then $\calO{k^2 \log n}$ is sufficient for universal support recovery. The constructions of their measurement matrices are based primarily on combinatorial objects, specifically expanders and Union Free Families (UFFs).

Most recently, \cite{ABK17} showed that a modified version of the UFFs used in \cite{GNJN13} called ``Robust UFFs'' (RUFFs) can be used to improve the upper bound on universal support recovery to $\calO{k^2 \log n}$ for all real-valued signals, matching the previous upper bound for nonnegative signals, and showed this is nearly tight with a lower bound of $\Omega(k^2 \log n / \log k)$ for real signals.



\begin{remark}
\label{rem:prior_work_error}
In \cite{GNJN13}, a second algorithm (termed ``S-approx'') is provided for universal approximate recovery in addition to that mentioned above; this second algorithm first recovers the support, then performs approximate recovery within the support, and is stated in that work to have sample complexity $\widetilde{\mathcal{O}} \left(k^3 \log \frac{n}{k} + \frac{k}{\epsilon} \right)$. Similarly, the subsequent work of \cite{ABK17} improved the results for support recovery to take only $\calO{k^2 \log \frac{n}{k}}$ measurements, and thus claimed that this would yield an algorithm for approximate recovery with sample complexity $\widetilde{\mathcal{O}} \left(k^2 \log \frac{n}{k} + \frac{k}{\epsilon} \right)$.

We have confirmed in personal correspondence with the authors of \cite{ABK17} that this is an error; the second portion of approximate recovery within the support in fact requires $\widetilde{\mathcal{O}} \left(\frac{k}{\epsilon} \log \frac{n}{k} \right)$ samples. Furthermore, the authors of \cite{ABK17} told us they had correspondence with the authors of \cite{GNJN13}, who confirmed the same error is present in their algorithm. Thus, the correct sample complexity of this S-approx algorithm of \cite{GNJN13} is $\widetilde{\mathcal{O}}\left(k^3 \log \frac{n}{k} + \frac{k}{\epsilon} \log \frac{n}{k}\right)$, and the correct sample complexity of the method of \cite{ABK17} is $\widetilde{\mathcal{O}}\left(k^2 \log \frac{n}{k} + \frac{k}{\epsilon} \log \frac{n}{k}\right)$. Both are thus asymptotically inferior to the $\widetilde{\mathcal{O}} \left(\frac{k}{\epsilon} \log \frac{n}{k} \right)$ sample complexity of the first algorithm of \cite{GNJN13} and the algorithm of \cite{JLBB13}. The same error propagated also to earlier versions of this paper -- the ramifications of this are discussed in \cref{rem:our_error}.
\end{remark}
 
In tandem with the development of these theoretical results providing necessary and sufficient numbers of measurements for support recovery and approximate vector recovery, there has been a significant body of work in other directions on 1-bit compressed sensing, such as heuristic algorithms that perform well empirically, and tradeoffs between different parameters. More specifically, \cite{JLBB13} introduced a gradient-descent based algorithm called Binary Iterative Hard Thresholding (BIHT) which performs very well in practice; later, \cite{DBLP:conf/aistats/Li16} gave another heuristic algorithm which performs comparably well or better, and aims to allow for very efficient decoding after the measurements are taken. Other papers such as \cite{DBLP:conf/nips/SlawskiL15} have studied the tradeoff between the amount of quantization of the signal, and the necessary number of measurements.

\paragraph{Our Results.}
 We focus primarily on upper bounds in the universal setting, aiming to give constructions that work with high probability for all sparse vectors. In \cite{ABK17}, 3 major open questions are given regarding Universal 1-bit Compressed Sensing, which, paraphrasing, are as follows:
 \begin{enumerate}[topsep=0pt,noitemsep]
 \item How many measurements are necessary and sufficient for a matrix to be used to exactly recover all $k$-sparse binary vectors?
 \item What is the correct complexity (in terms of number of measurements) of universal $\epsilon$-approximate vector recovery for real signals?
 \item Can we obtain explicit (i.e. requiring time polynomial in $n$ and $k$) constructions of the Robust UFFs used for universal support recovery (yielding measurement matrices with $\calO{k^2 \log n}$ rows)?
 \end{enumerate}
 
In this work we investigate all three Open Questions. Our primary contribution is the ``superset technique'' which relies on ideas from the closely related sparse recovery problem of group testing \cite{du2000combinatorial}; in particular, we show in \cref{thm:eps_ub_generalization} that for a large class of signals including all nonnegative (and thus all binary) signals, our method matches existing upper bounds for approximate recovery by first recovering an $\calO{k}$-sized superset of the support rather than the exact support, then subsequently using Gaussian measurements. For binary signals this gives a sample complexity of $\calO{k^{3/2} \log n}$, and for nonnegative signals a sample complexity of $\widetilde{\mathcal{O}}\left(\frac{k}{\epsilon} \log \frac{n}{k})\right)$.

\begin{remark}
\label{rem:our_error}
In prior versions of this work, we also overlooked the error mentioned in \cref{rem:prior_work_error}. This led us to incorrectly claim that our method gave asymptotic sample complexity improvements for universal approximate recovery of non-negative and binary signals, when in fact the sample complexity of our method for these classes of signals turns out to be the same as other prior methods. In particular, the stated sample complexities of \cref{thm:eps_ub_generalization}, \cref{cor:nonnegative_apx_rec}, and \cref{cor:binary_exact_rec} were incorrect in these earlier versions. In this version we have corrected the errors in those results.
\end{remark}
 
Regarding Open Question 3, using results of Porat and Rothschild regarding weakly explicit constructions of Error-Correcting Codes (ECCs) on the Gilbert-Varshamov bound \cite{PR11}, we give a construction of Robust UFFs yielding measurement matrices for support recovery with $\calO{k^2 \log n}$ rows in time that is polynomial in $n$ (though not in $k$) in \cref{thm:explicit_uffs}. Based on a similar idea, we also give a weakly explicit construction for non-universal approximate recovery using only sightly more measurements than is optimal ($\calO{k \log^2 n}$ as opposed to $\calO{k \log \frac{n}{k}}$) in \cref{sec:apx_rec_whp}; to our knowledge, explicit constructions in the non-universal setting have not been studied previously. Furthermore, this result gives a single measurement matrix which works for almost all vectors, as opposed to typical non-universal results which work with high probability for a particular vector and matrix pair.

In \cref{sec:suff_cond_reals}, we give a sufficient condition generalizing the notion of RUFFs for a matrix to be used for universal recovery of a superset of the support for all real signals; while we do not provide constructions, this seems to be a promising direction for resolving Open Question 2.

\begin{table}
\tiny
  \caption{Upper and lower bounds for 1bCS problems with $k$-sparse signals}
  \label{prior_results_table}
  \centering
  \begin{tabular}{llcl}
    \toprule
    Problem & UB & Explicit UB & LB \\
    \midrule
    Universal Support Recovery ($\mathbf{x} \in \R^n$) & $\calO{k^2 \log n}$ \cite{ABK17}  & $\calO{k^2 \log n}^*$ & $\Omega(k^2 \log n / \log k)$ \cite{ABK17}     \\
    Universal $\calO{k}$-Superset Support Recovery ($\mathbf{x} \in \R^n_{\geq 0}$) & $\calO{k \log \frac{n}{k}}^*$ & $\calO{k^{1+o(1)} \log \frac{n}{k}}^*$ & $\Omega(k \log \frac{n}{k})$ \\ 
    Universal $\epsilon$-approximate Recovery ($\mathbf{x} \in \R^n$) & $\widetilde{\mathcal{O}}\left(\frac{k}{\epsilon} \log \frac{n}{k}\right)$ \cite{GNJN13}, \cite{JLBB13}  & -- & $\Omega(k \log \frac{n}{k} + \frac{k}{\epsilon})$ \cite{ABK17} \\
    Universal $\epsilon$-approximate Recovery ($\mathbf{x} \in \R^n_{\geq 0}$) & $\widetilde{\mathcal{O}}\left(\frac{k}{\epsilon} \log \frac{n}{k}\right)^\dagger$ \cite{GNJN13}, \cite{JLBB13} & -- & $\Omega(k \log \frac{n}{k})$     \\
    Universal Exact Recovery ($\mathbf{x} \in \set{0, 1}^n$) & $\widetilde{\mathcal{O}}\left(k^{3/2} \log \frac{n}{k}\right)^\dagger$ \cite{GNJN13}, \cite{JLBB13}  & -- & $\Omega(k \log \frac{n}{k})$     \\
    Non-Universal Support Recovery ($\mathbf{x} \in \R^n$) & $\calO{k \log n}$ \cite{AtiaS12}  & $\calO{k \log^2 n}^*$ & $\Omega(k \log \frac{n}{k})$ \cite{AtiaS12}    \\
    \bottomrule
  \end{tabular}\\
*Bound shown in this work. \hspace{5cm} $\dagger$ Bound matched in this work.
\end{table}

The best known upper and lower bounds for the various compressed sensing problems considered in this work are presented in \cref{prior_results_table}.

\section{Definitions}

We write $M_i$ for the $i$th row of the matrix $M$, and $M_{i,j}$ for the entry of $M$ in the $i$th row and $j$th column. We write vectors $\mathbf{x}$ in boldface, and write $\mathbf{x}_i$ for the $i$th component of the vector $\mathbf{x}$. The set $\set{1, 2, \dotsc, n}$ will be denoted by $[n]$, and for any set $S$ we write $\mathcal{P}(S)$ for the power set of $S$ (i.e. the set of all subsets of $S$).

We will write $\supp{\mathbf{x}} \subseteq [n]$ to mean the set of indices of nonzero components of $\mathbf{x}$ (so $\supp{\mathbf{x}} = \set{i : \mathbf{x}_i \neq 0}$), and $||\mathbf{x}||_0$ to denote $|\supp{\mathbf{x}}|$.

For a real number $y$, $\s{y}$ returns $1$ if $y$ is strictly positive, $-1$ if y is strictly negative, and $0$ if $y = 0$. While this technically returns more than one bit of information, if we had instead defined $\s{y}$ to be 1 when $y \geq 0$ and $-1$ otherwise, we could still determine whether $y=0$ by looking at $\s{y}, \s{-y}$, so this affects the numbers of measurements by only a constant factor. We will not concern ourselves with the constants involved in any of our results, so we have chosen to instead use the more convenient definition.

We will sometimes refer to constructions from the similar ``group testing'' problem in our results. To this end, we will use the symbol ``$\odot$'' to represent the group testing measurement between a measurement vector and a signal vector. Specifically, for a measurement $\bfm$ of length $n$ and signal $\bfx$ of length $n$, $\bfm \odot \bfx$ is equal to $1$ if $\supp{\bfm} \cap \supp{\bfx}$ is nonempty, and $0$ otherwise. We will also make use of the ``list-disjunct'' matrices used in some group testing constructions.

\begin{definition}
An $m \times n$ binary matrix $M$ is $(k, l)$-list disjunct if for any two disjoint sets $S, T \subseteq{\Col(M)}$ with $|S| = k, |T| = l$, there exists a row in $M$ in which some column from $T$ has a nonzero entry, but every column from $S$ has a zero.
\end{definition}

The primary use of such matrices is that in the group testing model, they can be used to recover a superset of size at most $k + l$ of the support of any $k$-sparse signal $\bfx$ from applying a simple decoding to the measurement results $M \odot \bfx$.

In the following definitions, we write $S$ for a generic set that is the domain of the signal. In this paper we consider signals with domain $\R, \R_{\geq 0}$ (nonnegative reals), and $\set{0, 1}$.

\begin{definition}
An $m \times n$ measurement matrix $M$ can be used for \textbf{Universal Support Recovery} of $k$-sparse $\mathbf{x} \in S^n$ (in $m$ measurements) if there exists a decoding function $f: \set{-1, 0, 1}^m \rightarrow \mathcal{P}([n])$ such that $f(\s{M \mathbf{x}}) = \supp{\mathbf{x}}$ for all $\mathbf{x}$ satisfying $||\mathbf{x}||_0 \leq k$.
\end{definition}

\begin{definition}
An $m \times n$ measurement matrix $M$ can be used for \textbf{Universal $\epsilon$-Approximate Recovery} of $k$-sparse $\mathbf{x} \in S^n$ (in $m$ measurements) if there exists a decoding function $f: \set{-1, 0, 1}^m \rightarrow S^n$ such that
\begin{equation*}
\left| \left |\frac{\mathbf{x}}{||\mathbf{x}||_2} - \frac{f(\s{M \mathbf{x}})}{||f(\s{M \mathbf{x}})||_2}\right| \right|_2 \leq \epsilon,
\end{equation*}
for all $\mathbf{x}$ with $||\mathbf{x}||_0 \leq k$.
\end{definition}

\section{Upper Bounds for Universal Approximate Recovery}
\label{sec:superset_ubs}

Here we present our main result, an upper bound on the number of measurements needed to perform universal $\epsilon$-approximate recovery for a large class of real vectors that includes all binary vectors and all nonnegative vectors. The general technique will be to first use what are known as ``list-disjunct'' matrices from the group testing literature to recover a superset of the support of the signal, then use Gaussian measurements to approximate the signal within the superset. Because the measurements in the second part are Gaussian, we can perform the recovery within the (initially unknown) superset nonadaptively.

First, we need a lemma stating the necessary and sufficient conditions on a signal vector $\bfx$ in order to be able to reconstruct the results of a single group testing measurement $\bfm \odot \bfx$ using sign measurements. To concisely state the condition, we introduce some notation: for a subset $S \subseteq [n]$ and vector $\bfx$ of length $n$, we write $\bfx|_S$ to mean the restriction of $\bfx$ to the indices of $S$.

\begin{lemma}
\label{lem:group_testing_nec_cond}
Let $\bfm \in \set{0, 1}^n$ and $\bfx \in \R^n$. Define $S = \supp{\bfm} \cap \supp{\bfx}$. If either $S$ is empty or $S$ is nonempty and $\bfm^T|_S \hspace{1mm} \bfx|_S  \neq 0$, we can reconstruct the result of the group testing measurement $\bfm \odot \bfx$ from the sign measurement $\s{\bfm^T \bfx}$.
\end{lemma}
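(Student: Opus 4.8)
The plan is to observe that the sign measurement $\s{\bfm^T\bfx}$ already carries exactly the one bit of information that $\bfm\odot\bfx$ encodes, provided we can exclude the single degenerate case of cancellation on the intersection. So the proof is essentially a reduction of $\bfm^T\bfx$ to its restriction to $S$, followed by a two-line case analysis.

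First I would rewrite the inner product in terms of $S$ only. Since $\bfm_i\bfx_i = 0$ whenever $i\notin\supp{\bfm}$ or $i\notin\supp{\bfx}$, and $\bfm_i = 1$ for every $i\in S\subseteq\supp{\bfm}$, we get
\begin{equation*}
\bfm^T\bfx \;=\; \sum_{i=1}^n \bfm_i\bfx_i \;=\; \sum_{i\in S}\bfm_i\bfx_i \;=\; \bfm^T|_S\;\bfx|_S .
\end{equation*}
Thus the sign measurement depends on $\bfx$ only through $\bfx|_S$. Now the case analysis: if $S=\emptyset$ then $\bfm^T\bfx = 0$, so $\s{\bfm^T\bfx}=0$, while by definition $\bfm\odot\bfx = 0$; if $S\neq\emptyset$ and $\bfm^T|_S\,\bfx|_S\neq 0$ then $\bfm^T\bfx\neq 0$, so $\s{\bfm^T\bfx}\in\set{-1,1}$, while by definition $\bfm\odot\bfx = 1$. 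Hence, under the stated hypothesis, the decoding rule ``output $0$ if $\s{\bfm^T\bfx}=0$ and output $1$ otherwise'' recovers $\bfm\odot\bfx$ correctly.

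The only point that needs any care — and the sole reason the condition $\bfm^T|_S\,\bfx|_S\neq 0$ appears — is ruling out a nonempty intersection $S$ whose coordinates of $\bfx$ happen to cancel to give $\bfm^T\bfx = 0$; in that situation the observed sign $0$ would be indistinguishable from the empty-intersection case, and the group testing bit genuinely could not be recovered. There is no substantive obstacle beyond this observation; the rest is immediate from the definitions of $\odot$ and $\s{\cdot}$.
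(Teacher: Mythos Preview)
Your proof is correct and follows essentially the same approach as the paper: reduce $\bfm^T\bfx$ to its restriction on $S$, then do a short case analysis to verify that the rule ``output $0$ iff $\s{\bfm^T\bfx}=0$'' recovers $\bfm\odot\bfx$ under the stated hypothesis. The only cosmetic difference is that the paper splits cases on the observed sign while you split on whether $S$ is empty, but the content is identical.
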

\begin{proof}
We observe $\s{\bfm^T \bfx}$ and based on that must determine the value of $\bfm \odot \bfx$, or equivalently whether $S$ is empty or nonempty. If $\s{\bfm^T \bfx} \neq 0$ then $\bfm^T \bfx \neq 0$, so $S$ is nonempty and $\bfm \odot \bfx = 1$. Otherwise we have $\s{\bfm^T \bfx} = 0$, in which case we must have $\bfm^T \bfx = 0$. If $S$ were nonempty then we would have $\bfm^T|_S  \hspace{1mm} \bfx|_S = 0$, contradicting our assumption. Therefore in this case we must have $S$ empty and $\bfm \odot \bfx = 0$, so for $\bfx$ satisfying the above condition we can reconstruct the results of a group testing measurement.
\end{proof}

For convenience, we use the following property to mean that a signal $\bfx$ has the necessary property from \cref{lem:group_testing_nec_cond} with respect to every row of a matrix $M$.

\begin{property}
\label{apx_rec_necc_prop}
Let $M$ be an $m \times n$ matrix, and $\bfx$ a signal of length $n$. Define $S_i = \supp{M_i} \cap \supp{\bfx}$. Then for every row $M_i$ of $M$, either $S_i$ is empty, or $M_i^T|_{S_i} \hspace{1mm} \bfx|_{S_i} \neq 0$.
\end{property}

\begin{corollary}
\label{cor:eps_apx_superset}
Let $M$ be a $(k, l)$-list disjunct matrix, and $\bfx \in \R^n$ be a $k$-sparse real signal. If \cref{apx_rec_necc_prop} holds for $M$ and $\bfx$, then we can use the measurement matrix $M$ to recover a superset of size at most $k + l$ of the support of $\bfx$ using sign measurements.
\end{corollary}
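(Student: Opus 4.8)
The plan is to read this off as a dictionary translation between the sign‑measurement model and group testing, using \cref{lem:group_testing_nec_cond} to pass from $\s{M\bfx}$ to group testing outcomes and then invoking the classical decoding guarantee for list‑disjunct matrices. First, since \cref{apx_rec_necc_prop} holds for $M$ and $\bfx$, every row $M_i$ and the signal $\bfx$ satisfy the hypothesis of \cref{lem:group_testing_nec_cond} (either $S_i = \supp{M_i}\cap\supp{\bfx}$ is empty, or $M_i^T|_{S_i}\bfx|_{S_i}\neq 0$); hence from each coordinate $\s{M_i^T\bfx}$ of $\s{M\bfx}$ we recover the group testing outcome $\bfm_i\odot\bfx$, and therefore the entire vector $M\odot\bfx$. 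So it suffices to show a $(k,l)$-list disjunct $M$ recovers a superset of $\supp{\bfx}$ of size at most $k+l$ from $M\odot\bfx$; I would include the short standard argument for completeness.

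Next I would define the decoder: let $\hat S \subseteq [n]$ consist of every column $j$ such that $(M\odot\bfx)_i = 1$ for all rows $i$ with $M_{i,j} = 1$; equivalently, discard any column appearing in a row whose outcome is $0$. This set is computable from $M\odot\bfx$ alone, hence nonadaptively from $\s{M\bfx}$, so it is a valid decoding.

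For correctness, $\supp{\bfx}\subseteq\hat S$: if $M_{i,j}=1$ for some $j\in\supp{\bfx}$ then $\supp{M_i}\cap\supp{\bfx}\neq\emptyset$, so $(M\odot\bfx)_i=1$, and $j$ survives. For the size bound, suppose for contradiction $|\hat S| > k+l$. Since $|\supp{\bfx}|\le k$, choose $S$ with $\supp{\bfx}\subseteq S\subseteq\hat S$ and $|S|=k$, then $T\subseteq\hat S\setminus S$ with $|T|=l$; these are disjoint. Applying $(k,l)$-list disjunctness to $S,T$ gives a row $i$ in which every column of $S$ is zero while some $j^\star\in T$ is nonzero. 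The first fact gives $\supp{M_i}\cap\supp{\bfx}=\emptyset$, so $(M\odot\bfx)_i=0$; but then $j^\star$ would have been discarded, contradicting $j^\star\in\hat S$. Hence $|\hat S|\le k+l$.

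I do not expect a genuine obstacle here, as this is a corollary assembled from \cref{lem:group_testing_nec_cond} and a known group testing fact. The only points needing a little care are: (i) recording that \cref{apx_rec_necc_prop} is precisely the condition allowing \cref{lem:group_testing_nec_cond} to be applied to all rows of $M$ at once; and (ii) the padding step when $\bfx$ has strictly fewer than $k$ nonzeros, so that the definition of list‑disjunctness (which fixes $|S|=k$) applies verbatim.
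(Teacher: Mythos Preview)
Your proposal is correct and is exactly the argument the paper intends: the paper states this corollary without proof, treating it as immediate from \cref{lem:group_testing_nec_cond} (applied row-by-row via \cref{apx_rec_necc_prop}) together with the standard decoding guarantee for $(k,l)$-list disjunct matrices, which you have spelled out in full.
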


Combining this corollary with results of \cite{de2005optimal}, there exist matrices with $\calO{k \log(\frac{n}{k})}$ rows which we can use to recover an $\calO{k}$-sized superset of the support of $\bfx$ using sign measurements, provided $\bfx$ satisfies the above condition. Strongly explicit constructions of these matrices exist also, although requiring $\calO{k^{1 + o(1)} \log n}$ rows \cite{Che13}.

The other result we need is one that tells us how many Gaussian measurements are necessary to approximately recover a real signal using maximum likelihood decoding. Similar results have been shown previously, such as in \cite{JLBB13}, but we provide a proof that we find to be more straightforward and adequate in the specific case relevant to our results in \cref{sec:eps_apx_pf}.

\begin{lemma}
\label{lem:eps_apx_real_vec}
There exists a measurement matrix $A$ for universal $\epsilon$-approximate recovery of $k$-sparse vectors in $\R^n$, provided that
\begin{equation*}
m = \Omega\left(\frac{k}{\epsilon}\log \frac{n^{3/2}}{k\epsilon}\right)
\end{equation*}
In particular, if we take $A$ to be an $m \times n$ matrix with all entries drawn i.i.d. from $\mathcal{N}(0,1)$ with $m$ specified as above, then with high probability $A$ is such a matrix.
\end{lemma}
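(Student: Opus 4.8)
The plan is a covering-number argument of the type standard in one-bit compressed sensing, but carried out carefully enough to extract the $n^{3/2}$ inside the logarithm. Since sign measurements are scale-invariant we may restrict attention to $k$-sparse unit vectors, and we take as decoder the map $f$ that, given a sign pattern $\sigma$, returns an arbitrary $k$-sparse unit vector $\bfz$ with $\s{A\bfz}=\sigma$ (such a $\bfz$ exists since the true signal is one); this is the maximum-likelihood decoder for the deterministic sign model. It then suffices to show that, with high probability over $A$, any two $k$-sparse unit vectors $\bfx,\bfz$ with $\s{A\bfx}=\s{A\bfz}$ satisfy $\|\bfx-\bfz\|_2\le\epsilon$.

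First I would record two facts about a single Gaussian row $\bfa\sim\mathcal N(0,I_n)$. (i) For unit vectors $\bfu,\bfv$ at angle $\theta$, the pair $(\langle\bfa,\bfu\rangle,\langle\bfa,\bfv\rangle)$ is bivariate Gaussian with correlation $\cos\theta$, so by the random-hyperplane identity $\Prob[\s{\langle\bfa,\bfu\rangle}\neq\s{\langle\bfa,\bfv\rangle}]=\theta/\pi\ge\|\bfu-\bfv\|_2/\pi$ (using $\|\bfu-\bfv\|_2=2\sin(\theta/2)\le\theta$); moreover, intersecting with the ``margin'' events $|\langle\bfa,\bfu\rangle|\ge\tau$ and $|\langle\bfa,\bfv\rangle|\ge\tau$ costs at most $4\tau/\sqrt{2\pi}$ by a union bound on the anti-concentration of a standard Gaussian, so $\bfa$ strictly separates $\bfu$ and $\bfv$ with margin $\tau$ on both sides with probability at least $\|\bfu-\bfv\|_2/\pi-4\tau/\sqrt{2\pi}$. (ii) $\|\bfa\|_2\le C\sqrt n$ except with probability $e^{-\Omega(n)}$. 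Next I would fix an $\ell_2$-net $\cN$ of the set of $k$-sparse unit vectors with resolution $\delta$; a volumetric bound gives $|\cN|\le\binom nk(1+2/\delta)^k\le(3en/(k\delta))^k$. Condition on every row of $A$ having norm $\le C\sqrt n$ (fails with probability $\le m\,e^{-\Omega(n)}$ by (ii) and a union bound). Call $A$ \emph{good} if for every pair $\bfu,\bfv\in\cN$ with $\|\bfu-\bfv\|_2\ge\epsilon/2$ some row separates them with margin $\tau\defeq c\epsilon$ on both sides, where $c$ is a small absolute constant chosen via (i) so that each row achieves this with probability at least $c'\epsilon$ for an absolute constant $c'>0$. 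A fixed pair fails every row with probability $\le(1-c'\epsilon)^m\le e^{-c'\epsilon m}$, so by a union bound over the at most $|\cN|^2$ pairs, $A$ fails to be good with probability at most $(3en/(k\delta))^{2k}e^{-c'\epsilon m}$. Choosing $\delta=\Theta(\epsilon/\sqrt n)$ (small enough that also $\delta\le\epsilon/4$ and $C\sqrt n\,\delta\le\tau/2$) turns $3en/(k\delta)$ into $\Theta(n^{3/2}/(k\epsilon))$, so this bad probability is $o(1)$ once $m=\Omega\!\big(\tfrac k\epsilon\log\tfrac{n^{3/2}}{k\epsilon}\big)$ with a large enough hidden constant.

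Finally I would check that a good $A$ with all row norms $\le C\sqrt n$ does the job. Given $k$-sparse unit $\bfx,\bfz$ with $\s{A\bfx}=\s{A\bfz}$, suppose for contradiction $\|\bfx-\bfz\|_2>\epsilon$. Pick $\bfu,\bfv\in\cN$ with $\|\bfx-\bfu\|_2,\|\bfz-\bfv\|_2\le\delta$; then $\|\bfu-\bfv\|_2\ge\epsilon-2\delta\ge\epsilon/2$, so goodness yields a row $i$ with, say, $\langle A_i,\bfu\rangle\ge\tau$ and $\langle A_i,\bfv\rangle\le-\tau$. Since $|\langle A_i,\bfx\rangle-\langle A_i,\bfu\rangle|\le\|A_i\|_2\,\delta\le\tau/2$, we get $\langle A_i,\bfx\rangle\ge\tau/2>0$ and likewise $\langle A_i,\bfz\rangle\le-\tau/2<0$, so the $i$th sign coordinates differ, contradicting $\s{A\bfx}=\s{A\bfz}$. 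Hence $\|\bfx-\bfz\|_2\le\epsilon$, which is exactly what is needed for $f$ to be an $\epsilon$-approximate decoder.

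The main obstacle is the discretization step: because a Gaussian row has norm $\approx\sqrt n$, a $\delta$-perturbation of a vector can move a measurement by $\sqrt n\,\delta$, so a net point can only certify the sign of nearby vectors if the separating row has margin $\gtrsim\sqrt n\,\delta$; demanding a margin of order $\epsilon$ then forces $\delta\approx\epsilon/\sqrt n$, and it is precisely this $1/\sqrt n$ resolution that produces the $n^{3/2}$ in the logarithm. The other point requiring a little care is verifying that the anti-concentration loss $O(\tau)$ in (i) does not swamp the separation probability $\ge\|\bfu-\bfv\|_2/\pi$, which is routine once $\tau$ is a sufficiently small constant multiple of $\epsilon$.
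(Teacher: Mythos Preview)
Your proposal is correct and follows essentially the same route as the paper: a $\delta$-net on the $k$-sparse sphere, a ``separation with margin'' lower bound for a single Gaussian row, a union bound over pairs of net points, and the crucial choice $\delta=\Theta(\epsilon/\sqrt n)$ that produces the $n^{3/2}$ inside the logarithm. The only technical difference is in how the margin step is handled: the paper packages it as ``the hyperplane fully separates the $\delta$-balls $B_\delta(\bfu)$ and $B_\delta(\bfv)$'' and bounds $\Pr[|\langle\bfx,\bfh/\|\bfh\|_2\rangle|\le\delta]$ via the Beta$\big(\tfrac{n-1}{2},\tfrac{n-1}{2}\big)$ density at $1/2$ (obtaining $\le\delta\sqrt n/\pi$), whereas you split it into raw Gaussian anti-concentration $\Pr[|\langle\bfa,\bfu\rangle|\le\tau]\le 2\tau/\sqrt{2\pi}$ plus a separate high-probability norm bound $\|\bfa\|_2\le C\sqrt n$ to transfer from net points to nearby vectors. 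Your decomposition is arguably more elementary (no Beta distribution, no Stirling), while the paper's avoids conditioning on row norms; both yield the same separation probability $\Theta(\epsilon)$ and hence the same $m=\Omega\!\big(\tfrac{k}{\epsilon}\log\tfrac{n^{3/2}}{k\epsilon}\big)$.
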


Combining this with \cref{cor:eps_apx_superset} and the group testing constructions of \cite{de2005optimal}, we have the following theorem.

\begin{theorem}
\label{thm:eps_ub_generalization}
Let $M = \begin{bmatrix} M^{(1)} \\ M^{(2)} \end{bmatrix}$ where $M^{(1)}$ is a $(k, \calO{k})$-list disjunct matrix with $\calO{k \log \frac{n}{k}}$ rows, and $M^{(2)}$ is a matrix with $\calO{\frac{k}{\epsilon} \log \frac{n^{3/2}}{k \epsilon}}$ rows that can be used for $\epsilon$-approximate recovery within the superset, so $M$ consists of $\calO{k \log \frac{n}{k} + \frac{k}{\epsilon} \log \frac{n^{3/2}}{k \epsilon}}$ rows. Let $\bfx \in \R^n$ be a $k$-sparse signal. If \cref{apx_rec_necc_prop} holds for $M^{(1)}$ and $\bfx$, then $M$ can be used for $\epsilon$-approximate recovery of $\bfx$.
\end{theorem}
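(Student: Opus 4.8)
The plan is to combine the two building blocks stated above in the obvious two-stage way, and to check that nothing goes wrong in the handoff between the stages. First I would partition the measurement results: applying $M$ to $\bfx$ and taking signs yields the pair $\bigl(\s{M^{(1)}\bfx},\, \s{M^{(2)}\bfx}\bigr)$, since $M$ is just $M^{(1)}$ stacked on $M^{(2)}$. The first stage uses only $\s{M^{(1)}\bfx}$: because \cref{apx_rec_necc_prop} is assumed for $M^{(1)}$ and $\bfx$, \cref{cor:eps_apx_superset} applies, so from $\s{M^{(1)}\bfx}$ we can decode a superset $T \supseteq \supp{\bfx}$ with $|T| \le k + \calO{k} = \calO{k}$. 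Note this decoding is purely a function of $\s{M^{(1)}\bfx}$, hence nonadaptive.

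The second stage uses $\s{M^{(2)}\bfx}$ together with the now-known set $T$. Here I would observe that $\bfx$, viewed as a vector supported on $T$, is a $|T|$-sparse vector in $\R^{|T|}$ after restricting coordinates to $T$; equivalently, restricting $M^{(2)}$ to its columns indexed by $T$ gives an $\calO{\frac{k}{\epsilon}\log\frac{n^{3/2}}{k\epsilon}} \times |T|$ Gaussian matrix, and a Gaussian matrix with that many rows satisfies the hypothesis of \cref{lem:eps_apx_real_vec} with sparsity parameter $\calO{k}$ and ambient dimension $|T| \le n$ (the stated row count $\calO{\frac{k}{\epsilon}\log\frac{n^{3/2}}{k\epsilon}}$ already dominates $\Omega\bigl(\frac{|T|}{\epsilon}\log\frac{|T|^{3/2}}{|T|\epsilon}\bigr)$ since $|T| = \calO{k}$ and $|T| \le n$). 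Therefore \cref{lem:eps_apx_real_vec} gives a decoding function producing $\widehat{\bfx}$ supported on $T$ with $\bigl\|\tfrac{\bfx}{\|\bfx\|_2} - \tfrac{\widehat\bfx}{\|\widehat\bfx\|_2}\bigr\|_2 \le \epsilon$, which is exactly $\epsilon$-approximate recovery of $\bfx$.

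The composed decoder is: run the list-disjunct decoder on $\s{M^{(1)}\bfx}$ to get $T$, then run the $\epsilon$-approximate decoder of \cref{lem:eps_apx_real_vec} for the restricted Gaussian matrix $M^{(2)}|_T$ on $\s{M^{(2)}\bfx}$. Both halves are fixed in advance, so $M$ is a single nonadaptive matrix, and the total row count is the sum $\calO{k\log\frac{n}{k} + \frac{k}{\epsilon}\log\frac{n^{3/2}}{k\epsilon}}$ as claimed. The main subtlety — really the only thing to verify carefully — is that the second-stage guarantee still holds when $T$ is a random set determined by $\bfx$ rather than fixed: since \cref{lem:eps_apx_real_vec} is a \emph{universal} statement (one $A$ works for all $k'$-sparse vectors simultaneously, with high probability over the draw of $A$), and any $\bfx$ supported on the decoded $T$ is in particular $\calO{k}$-sparse in $\R^n$, the universality absorbs the dependence of $T$ on $\bfx$ at no extra cost. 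I would spell this point out explicitly, since it is the place where a careless argument would want a fresh Gaussian matrix per superset. Everything else is routine bookkeeping on the number of rows.
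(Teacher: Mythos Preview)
Your proposal is correct and is exactly the approach the paper takes: the theorem is presented there as an immediate combination of \cref{cor:eps_apx_superset} with \cref{lem:eps_apx_real_vec} (together with the existence of $(k,\calO{k})$-list-disjunct matrices), with no further proof given. Your explicit treatment of the one real subtlety---that the \emph{universality} of \cref{lem:eps_apx_real_vec} over all $\calO{k}$-sparse vectors in $\R^n$ is what lets a single fixed $M^{(2)}$ succeed regardless of which superset $T$ the first stage produces---is precisely the point the paper leaves implicit, and indeed the place where earlier versions of this line of work went wrong (cf.\ \cref{rem:prior_work_error} and \cref{rem:our_error}).
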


\begin{remark*}
We note that the class of signal vectors $\bfx$ which satisfy the condition in \cref{thm:eps_ub_generalization} is actually quite large, in the sense that there is a natural probability distribution over all sparse signals $\bfx$ for which vectors violating the condition occur with probability 0. The details are laid out in \cref{lem:GT_CS_conversion}.
\end{remark*}

As special cases, we have upper bounds for nonnegative and binary signals. For ease of comparison with the other results, we assume the binary signal is rescaled to have unit norm, so has all entries either 0 or equal to $1 / \sqrt{||\bfx||_0}$.

\begin{corollary}
\label{cor:nonnegative_apx_rec}
Let $M = \begin{bmatrix} M^{(1)} \\ M^{(2)} \end{bmatrix}$ where $M^{(1)}$ is a $(k, \calO{k})$-list disjunct matrix with $\calO{k \log \frac{n}{k}}$ rows, and $M^{(2)}$ is a matrix with $\calO{\frac{k}{\epsilon} \log \frac{n^{3/2}}{k\epsilon}}$ rows that can be used for $\epsilon$-approximate recovery within the superset as in \cref{lem:eps_apx_real_vec}, so $M$ consists of $\calO{k \log\frac{n}{k} + \frac{k}{\epsilon} \log \frac{n^{3/2}}{k \epsilon}}$ rows. Let $\bfx \in \R^n$ be a $k$-sparse signal. If all entries of $\bfx$ are nonnegative, then $M$ can be used for $\epsilon$-approximate recovery of $\bfx$.
\end{corollary}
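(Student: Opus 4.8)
The plan is to obtain this corollary as an immediate specialization of \cref{thm:eps_ub_generalization}. Since the matrix $M=\begin{bmatrix} M^{(1)} \\ M^{(2)}\end{bmatrix}$ here is assembled in exactly the same way as in that theorem, with the same two blocks and the same row counts, the stated number of measurements $\calO{k\log\frac{n}{k}+\frac{k}{\epsilon}\log\frac{n^{3/2}}{k\epsilon}}$ is inherited verbatim. Hence the only thing left to verify is that an arbitrary nonnegative $k$-sparse signal $\bfx$ satisfies \cref{apx_rec_necc_prop} with respect to the list-disjunct block $M^{(1)}$; once that is in place, \cref{thm:eps_ub_generalization} does the rest.

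First I would record that a $(k,\calO{k})$-list disjunct matrix is, by definition, a binary matrix, so each row $M^{(1)}_i$ has entries in $\set{0,1}$. Fix a row index $i$ and set $S_i=\supp{M^{(1)}_i}\cap\supp{\bfx}$. If $S_i$ is empty there is nothing to check, so suppose $S_i\neq\emptyset$. Because $M^{(1)}_i$ is $0/1$ and every $j\in S_i$ lies in $\supp{M^{(1)}_i}$, the restriction $M^{(1)}_i|_{S_i}$ is the all-ones vector, so
\[
(M^{(1)}_i)^T|_{S_i}\;\bfx|_{S_i}\;=\;\sum_{j\in S_i}\bfx_j .
\]
Now the nonnegativity hypothesis finishes the argument: for each $j\in S_i\subseteq\supp{\bfx}$ we have $\bfx_j\neq 0$, and by assumption $\bfx_j\geq 0$, hence $\bfx_j>0$. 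Since $S_i$ is nonempty, the sum above is a sum of strictly positive terms and is therefore strictly positive, in particular nonzero. This is precisely \cref{apx_rec_necc_prop} for $M^{(1)}$ and $\bfx$, so \cref{thm:eps_ub_generalization} applies and $M$ can be used for $\epsilon$-approximate recovery of $\bfx$.

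There is essentially no real obstacle here: the whole point is that the failure mode in \cref{lem:group_testing_nec_cond}, where $\bfm^T|_S\,\bfx|_S=0$ despite $S\neq\emptyset$, is a cancellation phenomenon that cannot occur when all the nonzero contributions $\bfx_j$ with $j\in S_i$ have the same sign — which is exactly what nonnegativity guarantees. The only point requiring a moment's care is invoking the definition of list-disjunctness to justify that $M^{(1)}$ is binary, so that the relevant inner product is genuinely the unsigned sum $\sum_{j\in S_i}\bfx_j$ rather than an arbitrary signed combination; after that, the corollary is a one-line consequence of the theorem, and the same reasoning covers the binary case as well since binary signals are a subclass of nonnegative ones.
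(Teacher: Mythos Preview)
Your proposal is correct and follows exactly the paper's approach: the paper's proof is the single observation that all entries of $M^{(1)}$ and $\bfx$ are nonnegative, so \cref{apx_rec_necc_prop} holds and \cref{thm:eps_ub_generalization} applies. You have simply spelled out in detail why nonnegativity of a binary measurement row together with nonnegativity of $\bfx$ forces the restricted inner product to be a strictly positive sum, which is precisely the content of that one-line proof.
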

\begin{proof}
In light of \cref{thm:eps_ub_generalization}, we need only note that as all entries of $M^{(1)}$ and $\bfx$ are nonnegative, \cref{apx_rec_necc_prop} is satisfied for $M^{(1)}$ and $\bfx$.
\end{proof}

\begin{corollary}
\label{cor:binary_exact_rec}
Let $M = \begin{bmatrix} M^{(1)} \\ M^{(2)} \end{bmatrix}$ where $M^{(1)}$ is a $(k, \calO{k})$-list disjunct matrix with $\calO{k \log \frac{n}{k}}$ rows, and $M^{(2)}$ is a matrix with $\calO{k^{3/2} \log \frac{n^{3/2}}{\sqrt{k}}}$ rows that can be used for $\epsilon$-approximate recovery (with $\epsilon < 1 / \sqrt{k}$) within the superset as in \cref{cor:eps_apx_superset}, so $M$ consists of $\calO{k \log \frac{n}{k} + k^{3/2} \log \frac{n^{3/2}}{\sqrt{k}}}$ rows. Let $\bfx \in \R^n$ be the $k$-sparse signal vector. If all nonzero entries of $\bfx$ are equal, then $M$ can be used for exact recovery of $\bfx$.
\end{corollary}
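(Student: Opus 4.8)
The plan is to derive this from \cref{cor:nonnegative_apx_rec} and then append one rounding step that exploits the discreteness of the admissible signals. First I would observe that because all nonzero entries of $\bfx$ are equal they are in particular nonnegative, so \cref{apx_rec_necc_prop} holds for $M^{(1)}$ and $\bfx$ (verbatim as in the proof of \cref{cor:nonnegative_apx_rec}); thus $M^{(1)}$ recovers an $\calO{k}$-sized superset of $\supp{\bfx}$ and $M^{(2)}$ can be used for $\epsilon$-approximate recovery within that superset. Substituting $\epsilon = \Theta(1/\sqrt{k})$ into the bound $\calO{\frac{k}{\epsilon}\log\frac{n^{3/2}}{k\epsilon}}$ on the rows of $M^{(2)}$ from \cref{thm:eps_ub_generalization} gives exactly $\calO{k^{3/2}\log\frac{n^{3/2}}{\sqrt{k}}}$ rows, as in the statement. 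Hence from $\s{M\bfx}$ we can produce a unit vector $\widehat{\bfx}$ with $\bigl\|\bfx/\|\bfx\|_2 - \widehat{\bfx}\bigr\|_2 \le \epsilon$.

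Next I would set up the rounding. Since every nonzero coordinate of $\bfx$ equals the same value, $\bfx/\|\bfx\|_2$ is determined by $\supp{\bfx}$ (it equals $1/\sqrt{|\supp{\bfx}|}$ on the support and $0$ off it), and vice versa; so exact recovery here is the same as selecting the correct element of the finite set $V = \set{\bfv_T : T\subseteq[n],\ 1\le|T|\le k}$, where $\bfv_T$ is the unit vector that is constant and positive on $T$ and zero elsewhere. The decoder I would use outputs the element of $V$ closest to $\widehat{\bfx}$ in $\ell_2$. By the triangle inequality, if $\|\bfv_T - \bfv_{T'}\|_2 > 2\epsilon$ for all distinct $T,T'$ with $|T|,|T'|\le k$, then for the true $T = \supp{\bfx}$ every other $\bfv_{T'}$ is strictly farther from $\widehat{\bfx}$ than $\bfv_T$, so the decoder returns $\bfv_{\supp{\bfx}}$.

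The technical heart is therefore the separation estimate $\min_{T\ne T'}\|\bfv_T - \bfv_{T'}\|_2 \ge 1/\sqrt{k}$. The near-extremal case is a nested pair $T'\subset T$ with $|T'| = |T| - 1 =: s-1 \le k-1$, where
\[
\|\bfv_T - \bfv_{T'}\|_2^2 \;=\; (s-1)\Bigl(\tfrac{1}{\sqrt{s-1}} - \tfrac{1}{\sqrt{s}}\Bigr)^2 + \tfrac{1}{s} \;\ge\; \tfrac{1}{s} \;\ge\; \tfrac{1}{k};
\]
a short case check disposes of the rest (two equal-size sets differ by at least $\sqrt{2/k}$, and for sets of different sizes the term from $T'\setminus T$, or else the rescaling term as above, already contributes at least $1/k$). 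Consequently any $\epsilon$ bounded by a small enough constant times $1/\sqrt{k}$ — which is the hypothesis $\epsilon < 1/\sqrt{k}$ up to the multiplicative constants we ignore throughout — yields $2\epsilon < \min_{T\ne T'}\|\bfv_T - \bfv_{T'}\|_2$, so the rounding is correct and $M$ can be used for exact recovery of $\bfx$. I expect this distance computation to be the only real obstacle; the rest is bookkeeping on top of \cref{cor:nonnegative_apx_rec}, the one extra point being that the Gaussian $M^{(2)}$ has to work for recovery inside every candidate $\calO{k}$-sized superset simultaneously, which holds with high probability by a union bound over the $\binom{n}{\calO{k}}$ supersets, exactly as in the discussion of \cref{thm:eps_ub_generalization}.
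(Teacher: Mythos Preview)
Your proposal is correct and follows the same approach as the paper's one-sentence proof: invoke the approximate-recovery result with $\epsilon < 1/\sqrt{k}$ and use that the minimum $\ell_2$ distance between distinct rescaled $k$-sparse ``binary'' unit vectors is at least $1/\sqrt{k}$, a fact the paper merely asserts but which you verify in detail (and you are also more careful than the paper about the factor of~$2$ needed for rounding). One small slip: ``all nonzero entries equal'' does not imply they are nonnegative---the common value $c$ could be negative---but \cref{apx_rec_necc_prop} still holds since $M_i^T|_{S_i}\,\bfx|_{S_i} = c\,|S_i| \neq 0$ whenever $S_i\neq\emptyset$, so your argument goes through unchanged.
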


\begin{proof}
Here we use the fact that if we perform $\epsilon$-approximate recovery using $\epsilon < 1 / \sqrt{k}$ then as the minimum possible distance between any two $k$-sparse rescaled binary vectors is $1 / \sqrt{k}$, we will recover the signal vector exactly.
\end{proof}

\section{Explicit Constructions}
\label{sec:explicit_constructions}

\subsection{Explicit Robust UFFs from Error-Correcting Codes}
\label{sec:explicit_uff}

In this section we explain how to combine several existing results in order to explicitly construct Robust UFFs that can be used for support recovery of real vectors. This partially answers Open Problem 3 from \cite{ABK17}.

\begin{definition}
\label{def:robust_uff}
A family of sets $\mathcal{F} = \set{B_1, B_2, \ldots, B_n}$ with each $B_i \subseteq [m]$ is an $(n, m, d, k, \alpha)$-Robust-UFF if $|B_i| = d, \forall i$, and for every distinct $j_0, j_1, \ldots, j_k \in [n]$, $|B_{j_0} \cap (B_{j_1} \cup B_{j_2} \cup \cdots \cup B_{j_k})| < \alpha |B_{j_0}|$.
\end{definition}

It is shown in \cite{ABK17} that nonexplicit $(n, m, d, k, 1/2)$-Robust UFFs exist with $m = \calO{k^2 \log n}, d = \calO{k \log n}$ which can be used to exactly recover the support of any $k$-sparse real vector of length $n$ in $m$ measurements.

The results we will need are the following, where the $q$-ary entropy function $H_q$ is defined as
\begin{equation*}
H_q(x) = x \log_q (q - 1) - x \log_q x - (1 - x) \log_q (1 - x).
\end{equation*}

\begin{theorem}[\cite{PR11} Thm. 2]
\label{thm:porat_rothschild}
Let $q$ be a prime power, $m$ and $k$ positive integers, and $\delta \in [0, 1]$. Then if $k \leq (1 - H_q(\delta)) m$, we can construct a $q$-ary linear code with rate $\frac{k}{m}$ and relative distance $\delta$ in time $\calO{m q^k}$.
\end{theorem}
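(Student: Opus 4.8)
The statement is the Gilbert--Varshamov existence bound made weakly explicit, so the plan is to take the standard random linear code argument and derandomize it by the method of conditional expectations, the delicate point being to keep the running time at $\calO{mq^k}$ rather than the naive $\calO{mq^{2k}}$.

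First, since the code is linear its minimum distance equals the smallest Hamming weight of a nonzero codeword, so it suffices to produce a generator matrix $G \in \F_q^{k \times m}$ with $\wt(xG) \ge \delta m$ for every nonzero message $x \in \F_q^k$. If $G$ has independent uniform entries then for each fixed nonzero $x$ the codeword $xG$ is uniform on $\F_q^m$, so $\Pr[\wt(xG) < \delta m] = V_q(m,\lceil \delta m\rceil - 1)/q^m$, where $V_q(m,r) = \sum_{i=0}^{r}\binom{m}{i}(q-1)^i$ is the Hamming ball volume. We may restrict to the regime $\delta \le 1 - 1/q$ (larger $\delta$ being degenerate by the Plotkin bound), where the standard estimate $V_q(m,\delta m) \le q^{H_q(\delta)m}$ applies, so a union bound over the $q^k - 1$ nonzero messages bounds the expected number of codewords of weight $< \delta m$ by $(q^k-1)\,q^{(H_q(\delta)-1)m} < q^{\,k - (1-H_q(\delta))m} \le 1$, using the hypothesis $k \le (1-H_q(\delta))m$. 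Hence a good $G$ exists.

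To derandomize, build $G$ column by column, $g_1,\dots,g_m \in \F_q^k$, carrying the pessimistic estimator $\Phi_j = \sum_{x \ne 0}\Pr[\wt(xG) < \delta m \mid g_1,\dots,g_j]$, the inner probability being over a uniform choice of the remaining columns. Then $\Phi_0 < 1$ by the previous step; $\Phi_j = \mathbb{E}_{g_{j+1}}[\Phi_{j+1}]$, so some column $g_{j+1}$ keeps $\Phi_{j+1} \le \Phi_j$; and $\Phi_m$ is exactly the integer count of nonzero codewords of weight $< \delta m$, so $\Phi_m < 1$ forces it to be $0$. Each conditional probability is cheap: writing $z_x = |\{i \le j : \langle x, g_i\rangle = 0\}|$, one has $\wt(xG) = (j - z_x) + B$ with $B \sim \mathrm{Bin}(m-j, 1-1/q)$ independent of the past, so $\Pr[\wt(xG) < \delta m \mid g_1,\dots,g_j]$ is a binomial tail depending only on $z_x$, evaluable in $\mathrm{poly}(m)$ time or in $O(1)$ from a table of the at most $m^2$ relevant tails precomputed once; and the counts $z_x$ for all $x$ are updated in $\calO{q^k}$ time per new column (computing all $q^k$ inner products $\langle x, g_{j+1}\rangle$ incrementally).

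The real work, and what I expect to be the obstacle, is getting the claimed $\calO{mq^k}$ total running time instead of $\calO{mq^{2k}}$: naively one would, at each of the $m$ steps, try all $q^k$ candidate columns and recompute the size-$q^k$ sum $\Phi_{j+1}$ for each. I would close this in one of two ways. Either (a) pick $g_{j+1}$ itself by a nested round of conditional expectations over its $k$ coordinates --- $k$ sub-steps with $q$ options each, where only the aggregate estimator under the induced conditioning is needed and can be maintained incrementally, trading the $q^k$ outer search for a factor $kq$; or (b) observe that, since binomial tails are monotone in $z_x$, minimizing $\Phi_{j+1}$ over the column reduces to choosing the column that minimizes a single weighted count of the messages with $\langle x, g_{j+1}\rangle = 0$, and fold that selection into the same $\calO{q^k}$ bookkeeping already used for the $z_x$ updates. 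I would then pin down the arithmetic model (field operations, with the binomial tails tabulated up front) under which the bound is exactly $\calO{mq^k}$, and note that $q^k$ is the number of codewords, so this is polynomial in that size though exponential in $k$, as the theorem claims.
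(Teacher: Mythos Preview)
The paper does not prove this statement at all: it is quoted as Theorem~2 of \cite{PR11} and then used as a black box in the proof of \cref{thm:explicit_uffs}. There is therefore no proof in the paper to compare your proposal against.

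For what it is worth, your sketch is the Porat--Rothschild argument. The random-linear-code Gilbert--Varshamov calculation and the conditional-expectations derandomization with the pessimistic estimator $\Phi_j$ are exactly how \cite{PR11} proceed, and your option~(a) --- fixing the generator matrix one entry at a time rather than one full column at a time --- is precisely their organization of the derandomization. The point you flag as the obstacle, namely avoiding the naive $\calO{mq^{2k}}$, is indeed where all the work in \cite{PR11} goes: when fixing entry $G_{t,j}$, only those messages $x$ whose support is contained in $[t]$ but not in $[t-1]$ have their conditional probability change, and for each such $x$ exactly one of the $q$ trial values makes the $j$-th coordinate vanish, so the change in $\Phi$ across all $q$ trial values can be aggregated in time proportional to the number of newly-determined messages. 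Summing over $t=1,\dots,k$ gives $q^k-1$ such messages per column and hence $\calO{mq^k}$ overall. Your option~(b) via monotonicity is not the route taken and would need considerably more justification, since minimizing the weighted count you describe is not obviously equivalent to minimizing $\Phi_{j+1}$.
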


\begin{theorem}[\cite{ABK17} Prop. 17]
\label{thm:ecc_uff_conversion}
Given a $q$-ary error correcting code with rate $r$ and relative distance $(1 - \beta)$, we can construct a $(q^{rd}, qd, d, 1, \beta)$-Robust-UFF.
\end{theorem}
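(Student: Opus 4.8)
The plan is to use the standard Kautz--Singleton-style conversion from a code to a set system. Let $C$ be the given $q$-ary code: it has some block length $d$, rate $r$, hence $|C| = q^{rd}$ codewords, and relative distance $1-\beta$, meaning that any two distinct codewords agree in at most $\beta d$ coordinates. I will build the Robust UFF on the ground set $[m]$ with $m = qd$, which I identify with the product set $[q] \times [d]$ (think of it as $d$ consecutive ``blocks'' of $q$ coordinates each).

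For each codeword $c = (c_1, \dots, c_d) \in C$, define the set
\[
B_c = \set*{(c_i, i) : i \in [d]} \subseteq [q] \times [d].
\]
There is exactly one such set per codeword, so the family $\mathcal{F} = \set*{B_c : c \in C}$ has $n = q^{rd}$ members, matching the first parameter. Since the second coordinate $i$ ranges over all of $[d]$ and is distinct in each element, $|B_c| = d$ for every $c$, matching the ``$d$'' parameter, and the ground set has size $qd = m$, matching the second parameter.

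It remains to verify the Robust-UFF intersection condition, which for $k = 1$ and $\alpha = \beta$ requires $|B_{c} \cap B_{c'}| < \beta |B_{c}| = \beta d$ for every pair of distinct codewords $c, c' \in C$. The key observation is that $(a, i) \in B_c \cap B_{c'}$ if and only if $c_i = a = c'_i$, so
\[
|B_c \cap B_{c'}| = \abs*{\set*{i \in [d] : c_i = c'_i}} = d - d_H(c, c'),
\]
where $d_H$ denotes Hamming distance. By the relative-distance hypothesis, $d_H(c,c') \geq (1-\beta)d$, hence $|B_c \cap B_{c'}| \leq \beta d$, which gives the desired bound. (If $(1-\beta)d$ is not an integer the inequality is automatically strict; if strictness is needed in the integer case, one simply starts from a code of relative distance $1-\beta'$ with $\beta' < \beta$, which yields intersection at most $\beta' d < \beta d$.) The construction is nothing more than reading off the symbol-position pairs of each codeword, so it is immediate and efficient once $C$ is given.

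I do not anticipate a genuine obstacle: the only things requiring care are the bookkeeping of the five parameters $(q^{rd}, qd, d, 1, \beta)$ and the strict-versus-nonstrict convention in the intersection bound. The heart of the argument is the single identity $|B_c \cap B_{c'}| = d - d_H(c,c')$ combined with the code's minimum-distance guarantee, which converts the coding-theoretic distance directly into the combinatorial ``almost-disjointness'' property of the Robust UFF.
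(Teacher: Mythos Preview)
Your argument is the standard Kautz--Singleton concatenation and is correct; the identity $|B_c \cap B_{c'}| = d - d_H(c,c')$ is exactly the point, and the parameter bookkeeping is right. Note that the present paper does not actually prove this statement---it is quoted verbatim as Proposition~17 of \cite{ABK17} and used as a black box in the proof of \cref{thm:explicit_uffs}---so there is no in-paper argument to compare against; your write-up is precisely the construction one finds in the cited source. Your handling of the strict-versus-nonstrict issue is also appropriate: the Robust-UFF definition here uses a strict inequality, so from relative distance exactly $1-\beta$ you only get $|B_c\cap B_{c'}|\le\beta d$, and one either slackens $\beta$ slightly or interprets the code's relative distance as a strict lower bound. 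This is a convention mismatch rather than a gap in the argument.
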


\begin{theorem}[\cite{ABK17} Prop. 15]
\label{thm:uff_equivalence}
If $\mathcal{F}$ is an $(n, m, d, 1, \alpha/k)$-Robust-UFF, then $\mathcal{F}$ is also an $(n, m, d, k,\alpha)$-Robust-UFF.
\end{theorem}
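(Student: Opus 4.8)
The plan is to deduce the $k$-wise bound directly from the pairwise ($k=1$) bound by subadditivity of set cardinality. First I would fix an arbitrary choice of distinct indices $j_0, j_1, \ldots, j_k \in [n]$, since the conclusion is a universally quantified statement over such tuples. The one structural observation needed is that intersection distributes over union, so that
\[
B_{j_0} \cap \left( B_{j_1} \cup B_{j_2} \cup \cdots \cup B_{j_k} \right) \;=\; \bigcup_{i=1}^{k} \left( B_{j_0} \cap B_{j_i} \right).
\]
Taking cardinalities and using the union bound (subadditivity of $|\cdot|$) gives
\[
\left| B_{j_0} \cap \bigcup_{i=1}^{k} B_{j_i} \right| \;\le\; \sum_{i=1}^{k} \left| B_{j_0} \cap B_{j_i} \right|.
\]

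Now I would invoke the hypothesis that $\mathcal{F}$ is an $(n,m,d,1,\alpha/k)$-Robust-UFF: for each $i \in [k]$ the pair $j_0, j_i$ consists of two distinct elements of $[n]$, so the defining inequality of a $(\,\cdot\,,1,\alpha/k)$-Robust-UFF yields $|B_{j_0} \cap B_{j_i}| < (\alpha/k)\,|B_{j_0}|$. Summing these $k$ strict inequalities (the sum of finitely many strict inequalities with $k \ge 1$ is strict) gives $\sum_{i=1}^{k} |B_{j_0} \cap B_{j_i}| < k \cdot (\alpha/k)\,|B_{j_0}| = \alpha\,|B_{j_0}|$. Combining with the displayed union bound, $|B_{j_0} \cap (B_{j_1} \cup \cdots \cup B_{j_k})| < \alpha\,|B_{j_0}|$, which is exactly the condition in \cref{def:robust_uff} for an $(n,m,d,k,\alpha)$-Robust-UFF. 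The size condition $|B_i| = d$ for all $i$ is inherited unchanged because the underlying family $\mathcal{F}$ is literally the same; since $j_0, \ldots, j_k$ were arbitrary, this finishes the argument.

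I do not anticipate any real obstacle here: the statement is essentially a union bound, and the only points meriting a word of care are that the set identity above is genuine distributivity (not merely an inclusion) and that strictness of the inequality is preserved under the finite sum, both of which hold unconditionally given $k \ge 1$.
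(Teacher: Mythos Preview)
Your argument is correct: distributivity of intersection over union followed by the union bound and summing the $k$ pairwise estimates is exactly the standard proof, and there are no gaps. Note that the paper itself does not prove this statement---it merely cites it as Proposition~15 of \cite{ABK17}---so there is nothing to compare against beyond observing that your proof is the natural one and matches what is in that reference.
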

By combining the above three results, we have the following.
\begin{theorem}
\label{thm:explicit_uffs}
We can explicitly construct an $(n, m, d, k, \alpha)$-Robust UFF with $m = \calO{\frac{k^2 \log n}{\alpha^2}}$ and $d = \calO{\frac{k \log n}{\alpha}}$ in time $\calO{(k/\alpha)^k}$.
\end{theorem}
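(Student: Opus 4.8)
The plan is to instantiate the chain \cref{thm:porat_rothschild} $\to$ \cref{thm:ecc_uff_conversion} $\to$ \cref{thm:uff_equivalence} with one carefully chosen code alphabet. Working backwards: by \cref{thm:uff_equivalence} it is enough to construct an $(n,m,d,1,\alpha/k)$-Robust UFF, and by \cref{thm:ecc_uff_conversion} such a family comes from a $q$-ary linear code of block length $d$, rate $r$, and relative distance $1-\alpha/k$, producing parameters $(q^{rd},qd,d,1,\alpha/k)$. Thus I need a code with at least $n$ codewords (so $q^{rd}\ge n$), block length $d=\calO{k\log n/\alpha}$, and ground set $qd=\calO{k^2\log n/\alpha^2}$; the last two constraints together force $q=\calO{k/\alpha}$.

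First I would fix $q$ to be a prime power of order $\Theta(k/\alpha)$ --- one exists in any constant-factor window by Bertrand's postulate --- with the constant chosen large enough that the target relative distance $\delta=1-\alpha/k$ lies strictly below the Plotkin threshold $1-1/q$ (e.g. $q\ge 2k/\alpha$ suffices). The quantitative heart of the argument is then to lower bound the Gilbert--Varshamov rate $1-H_q(\delta)$: a short expansion of $H_q$ about its maximum at $\delta=1-1/q$ gives $1-H_q(1-\alpha/k)=\Theta\!\left(\frac{\alpha}{k\log(k/\alpha)}\right)>0$. I then invoke \cref{thm:porat_rothschild} with this $q$ and $\delta$, taking the code dimension to be $\lceil\log_q n\rceil$ (so $q^{rd}\ge n$) and the block length $d$ minimal subject to the feasibility condition $\lceil\log_q n\rceil\le(1-H_q(\delta))\,d$; this gives $d=\Theta\!\left(\frac{\log_q n}{1-H_q(\delta)}\right)=\Theta\!\left(\frac{k\log n}{\alpha}\right)$, the two $\log(k/\alpha)$ factors cancelling.

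Feeding this code into \cref{thm:ecc_uff_conversion} yields a $(q^{rd},qd,d,1,\alpha/k)$-Robust UFF; discarding all but $n$ of its sets (a subfamily of a Robust UFF is again one) and applying \cref{thm:uff_equivalence} gives an $(n,m,d,k,\alpha)$-Robust UFF with $m=qd=\Theta(k/\alpha)\cdot\Theta(k\log n/\alpha)=\calO{k^2\log n/\alpha^2}$ and $d=\calO{k\log n/\alpha}$, as claimed. The running-time bound is inherited from the $\calO{m_{\mathrm{code}}\,q^{k_{\mathrm{code}}}}$ construction time of \cref{thm:porat_rothschild} evaluated at $m_{\mathrm{code}}=d$ and $k_{\mathrm{code}}=\lceil\log_q n\rceil$.

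I expect the one genuinely delicate point to be the rate estimate: because $\delta=1-\alpha/k$ sits only a $\Theta(\alpha/k)$-sized gap below the limiting value $1-1/q$, the bound $1-H_q(\delta)>0$ alone is useless --- one must show it is $\Omega\!\left(\alpha/(k\log(k/\alpha))\right)$ in order to keep $d$ (and hence $m$) at the stated sizes, and this requires the second-order behaviour of $H_q$ near its peak together with the correct constant-factor separation of $q$ from $k/\alpha$. Everything else is routine substitution of parameters through the three cited results.
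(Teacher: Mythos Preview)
Your proposal is correct and follows the same route as the paper: instantiate \cref{thm:porat_rothschild} with $q=\Theta(k/\alpha)$ and $\delta=1-\alpha/k$, convert via \cref{thm:ecc_uff_conversion} to an $(n,m,d,1,\alpha/k)$-Robust UFF, then apply \cref{thm:uff_equivalence}. You are in fact more careful than the paper on the two points that matter: the paper simply asserts the final bound $m=\calO{k^2\log n/\alpha^2}$ without isolating the rate estimate $1-H_q(\delta)=\Theta\!\bigl(\alpha/(k\log(k/\alpha))\bigr)$ that you correctly flag as the delicate step; and your running-time bookkeeping, $\calO{d\cdot q^{\lceil\log_q n\rceil}}$ (polynomial in $n$), matches what the paper's surrounding text promises but differs from the stated $\calO{(k/\alpha)^k}$, which appears to arise from conflating the code-dimension symbol $k$ in \cref{thm:porat_rothschild} with the sparsity parameter.
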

\begin{proof}
First, we instantiate \cref{thm:porat_rothschild} to obtain a $q$-ary code $\mathcal{C}$ of length $d$ with $q = \calO{k/\alpha}$, relative distance $\delta = \frac{k - \alpha}{k}$, and rate $r = 1 - H_q(\delta)$ in time $\calO{q^k}$.

Applying \cref{thm:ecc_uff_conversion} to this code results in an $(n, m, d, 1, \beta)$-Robust-UFF $\mathcal{F}$ where $n = q^{rd}$, $m = qd$, $\beta = 1 - \delta$. By \cref{thm:uff_equivalence}, $\mathcal{F}$ is also an $(n, m, d, k, \beta k)$-Robust UFF. Plugging back in the parameters of the original code,
\begin{eqnarray*}
m = qd = \frac{q \log n}{r \log q} = \frac{q \log n}{(1 - H_q((k - \alpha)/k)) \log q} = \calO{\frac{k^2 \log n}{\alpha^2}}, \\
\beta k = (1 - \delta) k = (1 - \frac{k - \alpha}{k}) k = k - (k - \alpha) = \alpha.
\end{eqnarray*}
\end{proof}

While the time needed for this construction is not polynomial in $k$ (and therefore the construction is not strongly explicit) as asked for in Open Question 3 of \cite{ABK17}, this at least demonstrates that there exist codes with sufficiently good parameters to yield Robust UFFs with $m = \calO{k^2 \log n}$.

\subsection{Non-Universal Approximate Recovery}
\label{sec:apx_rec_whp}

If instead of requiring our measurement matrices to be able to recover all $k$-sparse signals simultaneously (i.e. to be universal), we can instead require only that they are able to recover ``most'' $k$-sparse signals. Specifically, in this section we will assume that the sparse signal is generated in the following way: first a set of $k$ indices is chosen to be the support of the signal uniformly at random. Then, the signal is chosen to be a uniformly random vector from the unit sphere on those $k$ indices. We relax the requirement that the supports of all $k$-sparse signals can be recovered exactly (by some decoding) to the requirement that we can identify the support of a $k$-sparse signal with probability at least $1 - \delta$, where $\delta \in [0, 1)$. Note that even when $\delta = 0$, this is a weaker condition than universality, as the space of possible $k$-sparse signals is infinite.

It is shown in \cite{AtiaS12} that a random matrix construction using $\calO{k \log n}$ measurements suffices to recover the support with error probability approaching 0 as $k$ and $n$ approach infinity. The following theorem shows that we can explicitly construct a matrix which works in this setting, at the cost of slightly more measurements (about $\calO{k \log^2(n)}$).

\begin{theorem}
\label{thm:prob_explicit_UB}
We can explicitly construct measurement matrices for Support Recovery (of real vectors) with $m = \calO{k \frac{\log(n)}{\log k} \log(\frac{n}{\delta})}$ rows that can exactly determine the support of a $k$-sparse signal with probability at least $1 - \delta$, where the signals are generated by first choosing the size $k$ support uniformly at random, then choosing the signal to be a uniformly random vector on the sphere on those $k$ coordinates.
\end{theorem}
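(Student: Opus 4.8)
The plan is to build the measurement matrix in two stages, mirroring the structure of our universal constructions but now only needing to succeed on a random signal. First I would use the explicit construction of Robust UFFs from \cref{thm:explicit_uffs} (equivalently, a $q$-ary code on the Gilbert--Varshamov bound via \cref{thm:porat_rothschild}) with parameter $\alpha$ a small constant, say $\alpha = 1/2$. For a fixed $k$-sparse signal this gives a matrix with $m = \calO{k^2 \log n}$ rows that recovers the support; but the point here is to instead use the code directly as a ``list-disjunct-like'' object, or rather to observe that the bottleneck in the $k^2$ is the union bound over all $k$-subsets of potential ``confusing'' columns, and on a random signal we do not need to defeat an adversarial choice of those columns.

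More concretely, the approach I would take is the following. Take the $q$-ary code $\mathcal C$ of length $d = \calO{\frac{\log n}{\log k}\cdot\frac{\log(n/\delta)}{1}}$-ish (the exact length to be determined by the calculation below) from \cref{thm:porat_rothschild} with $q = \Theta(k)$, and form the standard concatenated binary matrix $M$ whose $n$ columns are indexed by codewords and whose rows are indexed by (coordinate, symbol) pairs, so $M$ has $qd$ rows. A nonzero coordinate $j$ of $\bfx$ contributes, in each of the $d$ ``blocks'' of $q$ rows, a single $1$ in the row corresponding to its symbol in that block. The decoding: a column $j$ is declared ``in the support'' if for (almost) every block, the row hit by column $j$'s symbol has a nonzero sign measurement. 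The key step is to argue that for a \emph{random} signal (random support, random unit vector on it), with probability $\geq 1-\delta$ two things hold simultaneously: (i) no cancellation occurs --- i.e. \cref{apx_rec_necc_prop} holds, so every group-testing measurement can be read off from the sign, which is automatic since a Gaussian-like continuous distribution puts zero mass on the cancellation event (this is exactly the content of the remark after \cref{thm:eps_ub_generalization} / \cref{lem:group_testing_nec_cond}); and (ii) every column \emph{outside} the support is ``covered'' by the true support in enough blocks that it would be wrongly decoded --- and we must show this \emph{fails}. For a fixed bad column $j_0\notin S$, the fraction of blocks in which $j_0$'s symbol collides with some symbol of one of the $k$ support coordinates is, by the code's distance property, at most roughly $k/q < 1$; choosing $q = Ck$ makes this a constant bounded below $1$, and then the probability that a random support makes \emph{all} (or a $1-o(1)$ fraction) of blocks collide decays exponentially in $d$. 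A union bound over the $\leq n$ columns outside the support then requires $d = \Omega(\log(n/\delta))$, and folding in $q=\Theta(k)$ and the rate loss from \cref{thm:porat_rothschild} (which costs the extra $\log n/\log k$ factor, exactly as in the proof of \cref{thm:explicit_uffs}) gives the claimed $m = qd = \calO{k\,\frac{\log n}{\log k}\,\log\frac{n}{\delta}}$ rows.

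The step I expect to be the main obstacle is making the ``collision fraction'' argument for a random support precise: I need that for a fixed external column $j_0$, a uniformly random $k$-subset $S$ (disjoint from $j_0$) fails to cover a $1-\gamma$ fraction of the $d$ coordinate-blocks except with probability exponentially small in $d$. The subtlety is that the events ``block $i$ is covered'' are not independent across $i$ (they are coupled through the shared random set $S$), so a naive Chernoff bound does not directly apply; I would handle this either by a negative-association / martingale argument over the choice of $S$, or more simply by bounding, for each fixed small set of ``uncovered'' blocks, the probability that $S$ avoids all the symbols $j_0$ uses in those blocks --- which is a clean counting bound since the code's distance forces $j_0$ to use many distinct symbols across blocks. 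Once that concentration is in hand, the remaining pieces (reading off group-testing results via \cref{lem:group_testing_nec_cond}, the union bound over external columns, and plugging in the Porat--Rothschild parameters) are routine. I would also remark that the continuous-distribution argument for (i) is the reason this theorem is stated for signals drawn uniformly from the sphere rather than for \emph{all} $k$-sparse signals: an adversarial real signal could engineer a cancellation $M_i^T|_{S_i}\,\bfx|_{S_i}=0$, but such signals form a measure-zero set.
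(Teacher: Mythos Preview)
Your proposal takes a substantially different and harder route than the paper. The paper's proof is almost a one-liner: it invokes \cref{lem:GT_CS_conversion} (your observation (i), which you have correctly identified) to show that for a signal drawn uniformly from the sphere on a random support, the sign measurements $\sign(M\bfx)$ recover the group testing outcomes $M\odot\bfx$ with probability $1$; then it simply cites the existing explicit group-testing construction of \cite{Maz16} (\cref{thm:prob_GT}), which already gives an explicit matrix with $\calO{\frac{k}{\log k}\log n\,\log(n/\delta)}$ rows for recovering a uniformly random $k$-sparse binary support with probability $\geq 1-\delta$. No new combinatorial analysis is needed.

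Your plan instead tries to re-derive the group-testing piece from a Porat--Rothschild code, and here there is a genuine gap. You write that ``by the code's distance property'' the fraction of blocks in which an external column $j_0$ collides with the support $S$ is at most $k/q$. To get that deterministic bound from distance you need relative distance $\geq 1-1/q$; with $q=\Theta(k)$ this forces rate $O(1/k)$, hence length $d=\Theta(k\log n/\log k)$ and $m=qd=\Theta(k^2\log n/\log k)$ --- i.e.\ you have reproduced the \emph{universal} Robust-UFF bound, not the target. In that regime the subsequent ``probability over random $S$'' step is superfluous (the bound already holds for every $S$), so the two halves of your argument are operating in incompatible parameter regimes. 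To actually reach $m=\calO{k\frac{\log n}{\log k}\log(n/\delta)}$ one must drop the high-distance requirement and run the whole collision analysis probabilistically over the random support from the outset; getting that concentration (with the right dependence on $d$, over a \emph{structured} code rather than a random matrix) is exactly the content of \cite{Maz16}, and your sketch does not supply it. The cleaner fix is simply to quote \cref{thm:prob_GT} after establishing the sign-to-group-testing conversion, which is what the paper does.
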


To prove this theorem, we need a lemma which explains how we can use sign measurements to ``simulate'' group testing measurements with high probability. Both the result and proof are similar to \cref{lem:group_testing_nec_cond}, with the main difference being that given the distribution described above, the vectors violating the necessary condition in \cref{lem:group_testing_nec_cond} occur with zero probability and so can be safely ignored. For this lemma, we do not need the further assumption made in \cref{thm:prob_explicit_UB} that the distribution over support sets is uniform. The proof is presented in \cref{sec:lem12_pf}.

\begin{lemma}
\label{lem:GT_CS_conversion}
Suppose we have a measurement vector $\mathbf{m} \in \set{0, 1}^n$, and a $k$-sparse signal $\mathbf{x} \in \R^n$. The signal $\mathbf{x}$ is generated randomly by first picking a subset of size $k$ from $[n]$ (using any distribution) to be the support, then taking $\bfx$ to be a uniformly random vector on the sphere on those $k$ coordinates. Then from $\sign(\bfm^T \bfx)$, we can determine the value of $\mathbf{m} \odot \mathbf{x}$ with probability 1.
\end{lemma}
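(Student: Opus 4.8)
The plan is to reduce to \cref{lem:group_testing_nec_cond} and argue that the signals for which that lemma's sufficient condition fails form a null set under the given distribution. Recall that by \cref{lem:group_testing_nec_cond}, from $\sign(\bfm^T\bfx)$ we can recover $\bfm\odot\bfx$ (output $1$ if $\sign(\bfm^T\bfx)\neq 0$, output $0$ otherwise) unless $S:=\supp{\bfm}\cap\supp{\bfx}$ is nonempty and $\bfm^T|_S\,\bfx|_S=0$. Note that since $\bfm$ vanishes off $\supp{\bfm}$ and $\bfx$ vanishes off $\supp{\bfx}$, we always have $\bfm^T|_S\,\bfx|_S=\bfm^T\bfx=\sum_{i\in S}\bfx_i$ (the entries of $\bfm$ on $S$ are all $1$). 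So the only ``bad'' event is $\{\supp{\bfm}\cap\supp{\bfx}\neq\emptyset\ \text{and}\ \bfm^T\bfx=0\}$, and it suffices to show this has probability $0$.

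First I would condition on the randomly chosen support set $T\subseteq[n]$, $|T|=k$; since $T$ ranges over the finitely many $k$-subsets of $[n]$, it is enough to show the bad event has conditional probability $0$ for each fixed $T$ and then take a finite union bound. Fix $T$, and let $S=\supp{\bfm}\cap T$ (almost surely $\supp{\bfx}=T$, since a uniformly random point of the sphere has all coordinates nonzero). If $S=\emptyset$ the bad event is empty, so assume $S\neq\emptyset$. Conditioned on $T$, the vector $\bfx|_T$ is distributed according to the rotation-invariant (surface) measure on the unit sphere $\mathbb{S}^{k-1}\subseteq\R^T$, and the bad event becomes $\{v\in\mathbb{S}^{k-1}:\sum_{i\in S}v_i=0\}$.

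The key observation is then that $\{v:\sum_{i\in S}v_i=0\}$ is a hyperplane $H$ through the origin, and since $S$ is nonempty the functional $v\mapsto\sum_{i\in S}v_i$ is not identically zero, so $H$ is a proper $(k-1)$-dimensional subspace; hence $\mathbb{S}^{k-1}\cap H$ is a great subsphere of dimension $k-2$, which has surface measure $0$ in $\mathbb{S}^{k-1}$. Thus the bad event has conditional probability $0$ given $T$, and summing over the finitely many $T$ gives probability $0$ overall; the decoding described above therefore recovers $\bfm\odot\bfx$ with probability $1$. I do not expect a real obstacle here; the only points needing a word of care are that ``uniformly random on the sphere'' is interpreted as the rotation-invariant measure (so that proper hyperplane sections are null), and that one must condition on $T$ before invoking this, since $S=\supp{\bfm}\cap T$ (and hence the relevant functional) depends on $T$. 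The cases $k=0$ or $S=\emptyset$ are handled trivially by the first branch of \cref{lem:group_testing_nec_cond}.
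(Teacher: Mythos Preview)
Your proposal is correct and follows essentially the same approach as the paper: condition on the support set, then show that the ``bad'' event $\{\bfm^T\bfx=0\text{ with }S\neq\emptyset\}$ has probability $0$ under the uniform measure on the sphere. The only cosmetic difference is that the paper establishes the measure-zero step by passing to the Gaussian representation of the uniform sphere distribution (arguing that one Gaussian coordinate would have to hit a specific value determined by the others), whereas you argue directly that a proper hyperplane section of $\mathbb{S}^{k-1}$ is a lower-dimensional great subsphere and hence null; both are standard and equivalent.
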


As the above argument works with probability 1, we can easily extend it to an entire measurement matrix $M$ with any finite number of rows by a union bound, and recover all the group testing measurement results $M \odot \mathbf{x}$ with probability 1 as well. This means we can leverage the following result from \cite{Maz16}:

\begin{theorem}[\cite{Maz16} Thm. 5]
\label{thm:prob_GT}
When $\mathbf{x} \in \set{0, 1}^n$ is drawn uniformly at random among all $k$-sparse binary vectors, there exists an explicitly constructible group testing matrix $M$ with $m = \calO{\frac{k}{\log k} \log(n) \log(\frac{n}{\delta})}$ rows which can exactly identify $\mathbf{x}$ from observing the measurement results $M \odot \mathbf{x}$ with probability at least $1 - \delta$.
\end{theorem}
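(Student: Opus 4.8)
The plan is to prove this as a statement about \emph{average-case, nonadaptive group testing}: given a uniformly random defective set $S = \supp{\bfx}$ of size $k$, I would construct an explicit test matrix $M$ under which a simple decoder recovers $S$ from $M \odot \bfx$ with probability $\geq 1 - \delta$. I would use the COMP (combinatorial orthogonal matching pursuit) decoder: declare item $j$ \emph{non-defective} whenever $j$ lies in some test whose outcome is $0$, and declare all remaining items defective. This decoder never errs on a genuine defective, since every test containing a defective returns $1$; hence the only possible error is a \emph{false positive}, namely a non-defective $j' \notin S$ that appears in no all-non-defective (negative) test. The whole problem thus reduces to bounding, for the random support $S$, the probability that some non-defective is never isolated by a negative test.

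For the explicit matrix I would use the classical code-to-test-matrix (Kautz--Singleton) construction: take an explicit $q$-ary code $\cC = \set{\bfc^{(1)}, \ldots, \bfc^{(n)}}$ of length $t$ and large minimum distance (a Reed--Solomon code, or a Gilbert--Varshamov code from \cref{thm:porat_rothschild}), and build $M$ with $qt$ rows indexed by pairs $(i,a) \in [t] \times [q]$, placing a $1$ in column $j$ of row $(i,a)$ exactly when $\bfc^{(j)}_i = a$. Then $m = qt$, each item sits in exactly $t$ tests, and the tests containing $j'$ are precisely $\set{(i, \bfc^{(j')}_i) : i \in [t]}$. The test $(i,\bfc^{(j')}_i)$ is negative iff no defective agrees with $j'$ in coordinate $i$, so $j'$ is a false positive iff for \emph{every} coordinate $i$ some defective $j \in S$ has $\bfc^{(j)}_i = \bfc^{(j')}_i$; equivalently, the agreement sets $A_j = \set{i : \bfc^{(j)}_i = \bfc^{(j')}_i}$ over $j \in S$ cover all of $[t]$. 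The distance of $\cC$ controls $|A_j| \leq (1 - \delta_{\cC})\,t$ for each individual defective, and I would pick the alphabet size $q = \Theta(k)$ so that, for uniformly random $S$, each fixed coordinate of $j'$ collides with the defective set with only constant probability $\approx 1 - e^{-k/q}$.

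The heart of the argument is then to bound the per-item failure probability $\Prob_S\!\left[\bigcup_{j \in S} A_j = [t]\right]$ by $\delta/n$; once this holds, the expected number of false positives is at most $(n-k)\cdot \delta/n \leq \delta$, and Markov's inequality yields exact recovery with probability $\geq 1 - \delta$. I expect this step to be the main obstacle, because the coordinate-coverage events are \emph{correlated} through the shared random set $S$, so one cannot simply multiply per-coordinate collision probabilities. I would control the correlation either by a second-moment / Janson-type estimate that exploits the near-independence of distinct code coordinates guaranteed by the large distance (and, for codes such as Reed--Solomon, the MDS property), or by bounding the covering probability directly against the distance distribution of $\cC$. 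Finally I would balance the two free parameters: the alphabet $q = \Theta(k)$ supplies the factor $k$, the code dimension $\kappa = \log_q n = \Theta(\log n / \log k)$ forces the length to be at least $\kappa$ and supplies the $\log n / \log k$ factor, and an additional $\Theta\!\left(\log\frac{n}{\delta}\right)$ blow-up of the length drives the per-coordinate collisions down enough to reach failure probability $\delta/n$; together these give $m = qt = \calO{\frac{k}{\log k}\log n \,\log\frac{n}{\delta}}$. Since Reed--Solomon codes (or the codes of \cref{thm:porat_rothschild}) are explicitly constructible, the resulting $M$ is explicit, as required.
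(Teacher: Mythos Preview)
The paper does not prove this statement at all: it is quoted verbatim as Theorem~5 of \cite{Maz16} and invoked as a black box in order to establish \cref{thm:prob_explicit_UB}. There is therefore no in-paper proof to compare your proposal against.

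For what it is worth, your outline is broadly faithful to the approach of \cite{Maz16}: that paper also builds the test matrix from a Kautz--Singleton style code concatenation (Reed--Solomon based), analyzes the COMP-type ``declare non-defective if seen in a negative test'' decoder, and bounds, over the random defective set $S$, the probability that a fixed non-defective is covered in every code coordinate. The point you flag as the main obstacle---controlling the correlation among the per-coordinate coverage events induced by the shared random support $S$---is indeed where the real work lies; \cite{Maz16} handles it with a direct combinatorial argument tied to the code's distance structure rather than a Janson-type inequality, and your sketch does not yet supply that step. Your parameter balance (alphabet $q=\Theta(k)$, code dimension giving the $\log n/\log k$ factor, and length blown up by $\Theta(\log(n/\delta))$) is the correct one and matches the cited result. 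So the plan is on the right track, but the present paper simply imports the theorem and neither supplies nor requires any of this argument.
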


Combining this with the lemma above, we can use the matrix $M$ from \cref{thm:prob_GT} with $m = \calO{\frac{k}{\log k} \log n \log(\frac{n}{\delta})}$ rows (now representing sign measurements) to exactly determine the support of $\mathbf{x}$ with probability at least $1 - \delta$; we first use \cref{lem:GT_CS_conversion} to recover the results of the group testing tests $M \odot \mathbf{x}$ with probability 1, and can then apply the above theorem using the results of the group testing measurements.

We can also use this construction for approximate recovery rather than support recovery using \cref{lem:eps_apx_real_vec}, by appending $\calO{\frac{k}{\epsilon}\log \frac k\epsilon}$ rows of Gaussian measurements to $M$, first recovering the exact support, then doing approximate recovery within that support. This gives a matrix with about $\calO{k \log^2(n) + \frac{k}{\epsilon}\log \frac k\epsilon}$ rows for non-universal approximate recovery of real signals, where the top portion is explicit.

\begin{remark*}
Above, we have shown that in the non-universal setting, we can use constructions from group testing to recover the exact support with high probability, and then subsequently perform approximate recovery within that exact support. If we are interested only in performing approximate recovery, we can apply our superset technique here as well; \cref{lem:GT_CS_conversion} implies also that using a $(k, \calO{k})$-list disjunct matrix we can with probability 1 recover an $\calO{k}$-sized superset of the support, and such matrices exist with $\calO{k \log \frac{n}{k}}$ rows. Following this, we can use $\calO{\frac{k}{\epsilon} \log \frac{k}{\epsilon}}$ more Gaussian measurements to recover the signal within the superset. This gives a non-universal matrix with $\calO{k \log \frac{n}{k} + \frac{k}{\epsilon} \log \frac{k}{\epsilon}}$ rows for approximate recovery, the top part of which can be made strongly explicit with only slightly more measurements ($\calO{k^{1 + o(1)} \log \frac{n}{k}}$ vs. $\calO{k \log \frac{n}{k}}$).
\end{remark*}

\section{Experiments}
\label{sec:experiments}

In this section, we present some empirical results relating to the use of our superset technique in approximate vector recovery for real-valued signals. To do so, we compare the average error (in $\ell_2$ norm) of the reconstructed vector from using an ``all Gaussian'' measurement matrix to first using a small number of measurements to recover a superset of the support of the signal, then using the remainder of the measurements to recover the signal within that superset via Gaussian measurements. We have used the well-known BIHT algorithm of \cite{JLBB13} for recovery of the vector both using the all Gaussian matrix and within the superset, but we emphasize that this superset technique is highly general, and could just as easily be applied on top of other decoding algorithms that use only Gaussian measurements, such as the ``QCoSaMP'' algorithm of \cite{DBLP:conf/ita/ShiCGTN16}.

To generate random signals $\bfx$, we first choose a size $k$ support uniformly at random among the $\binom{n}{k}$ possibilities, then for each coordinate in the chosen support, generate a random value from $\mathcal{N}(0, 1)$. The vector is then rescaled so that $||\bfx||_2 = 1$.

\begin{figure}[H]
        \begin{subfigure}[b]{0.5\textwidth}
                \centering
                \includegraphics[width=.85\linewidth]{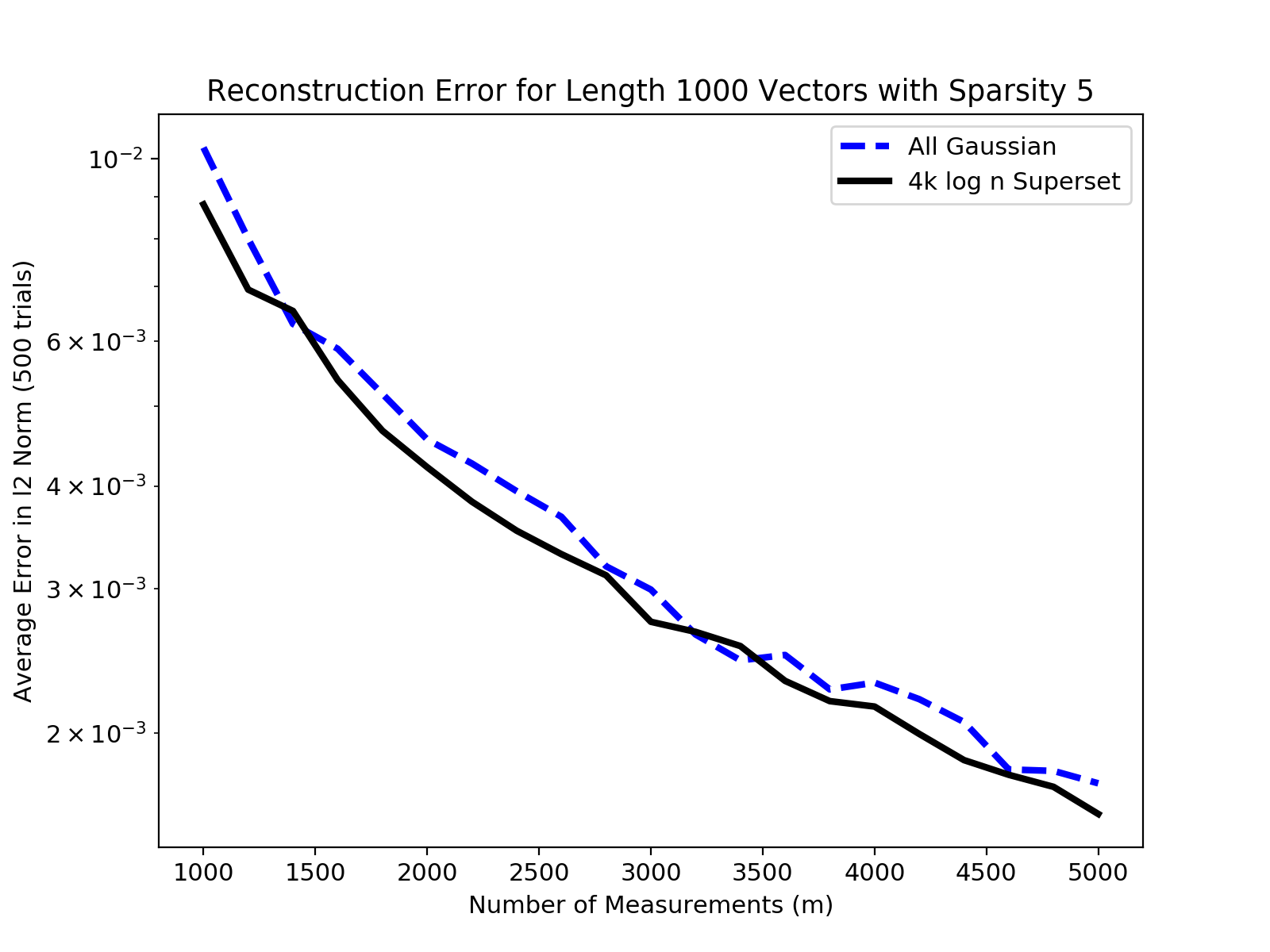}
                \caption{$n=1000, k=5$}
                \label{fig:k5}
        \end{subfigure}%
        \begin{subfigure}[b]{0.5\textwidth}
                \centering
                \includegraphics[width=.85\linewidth]{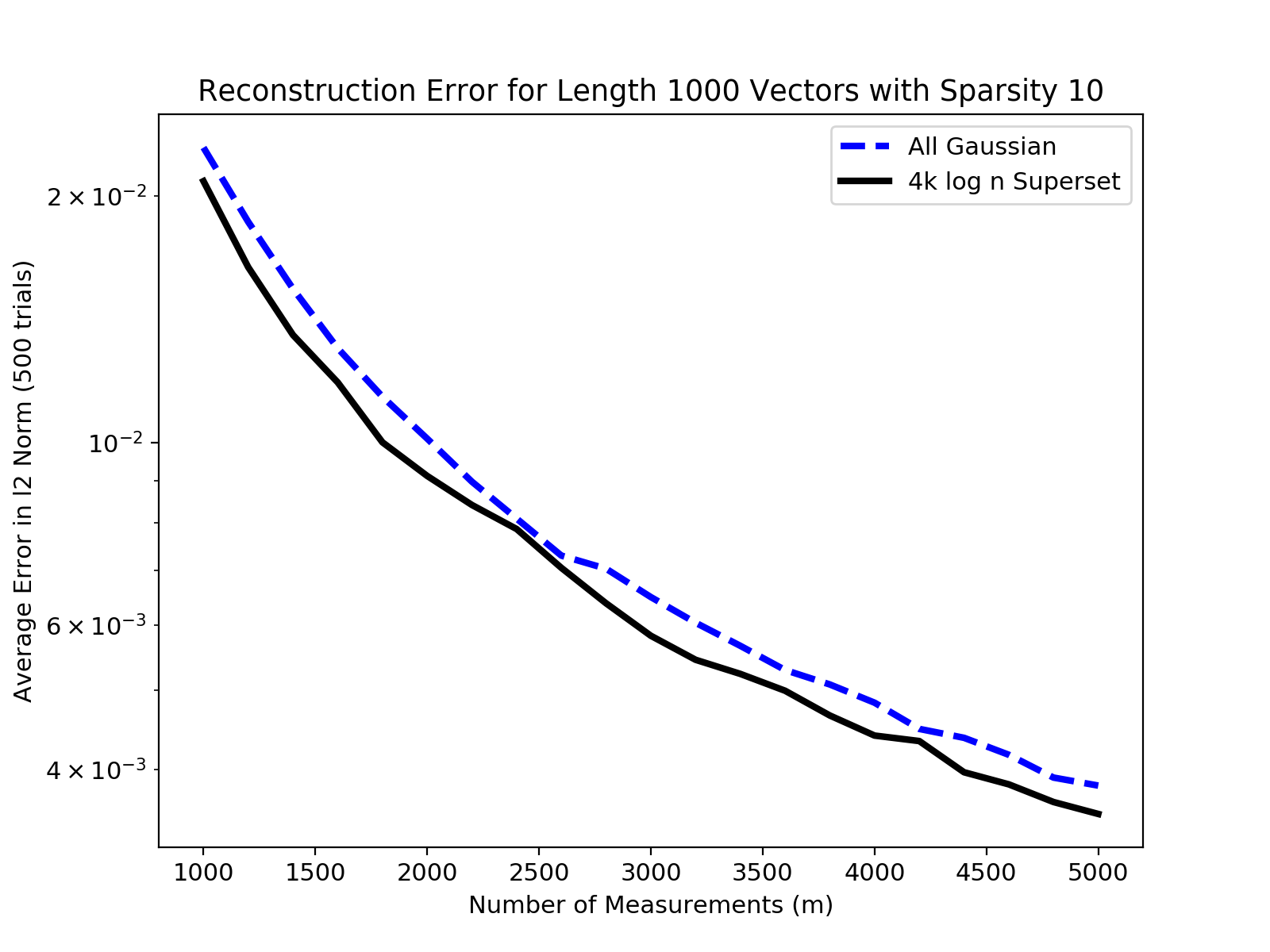}
                \caption{$n=1000, k=10$}
                \label{fig:k10}
        \end{subfigure}%
        \\
        \begin{subfigure}[b]{0.5\textwidth}
                \centering
                \includegraphics[width=.85\linewidth]{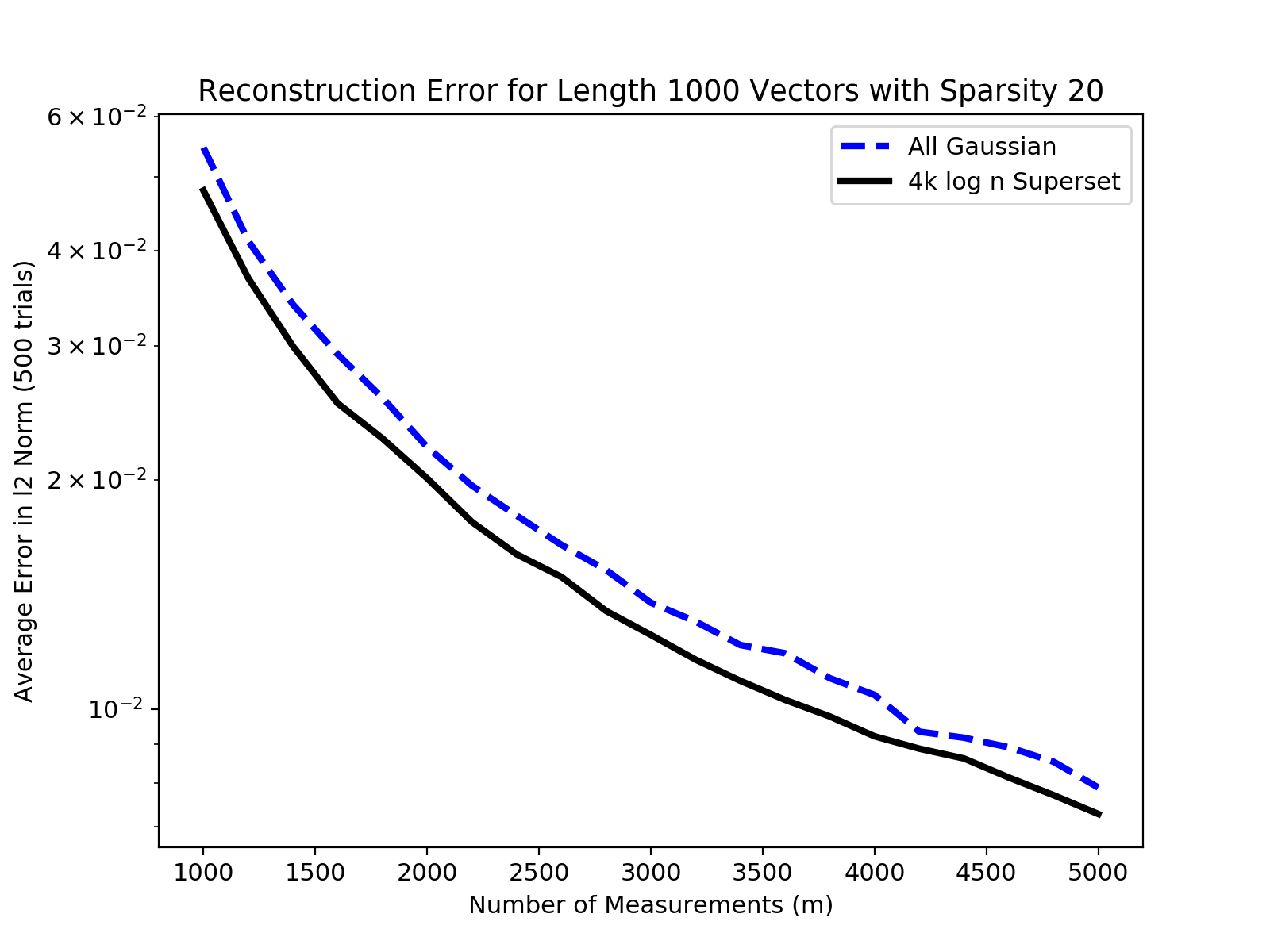}
                \caption{$n=1000, k=20$}
                \label{fig:k20}
        \end{subfigure}%
        \begin{subfigure}[b]{0.5\textwidth}
                \centering
                \includegraphics[width=.85\linewidth]{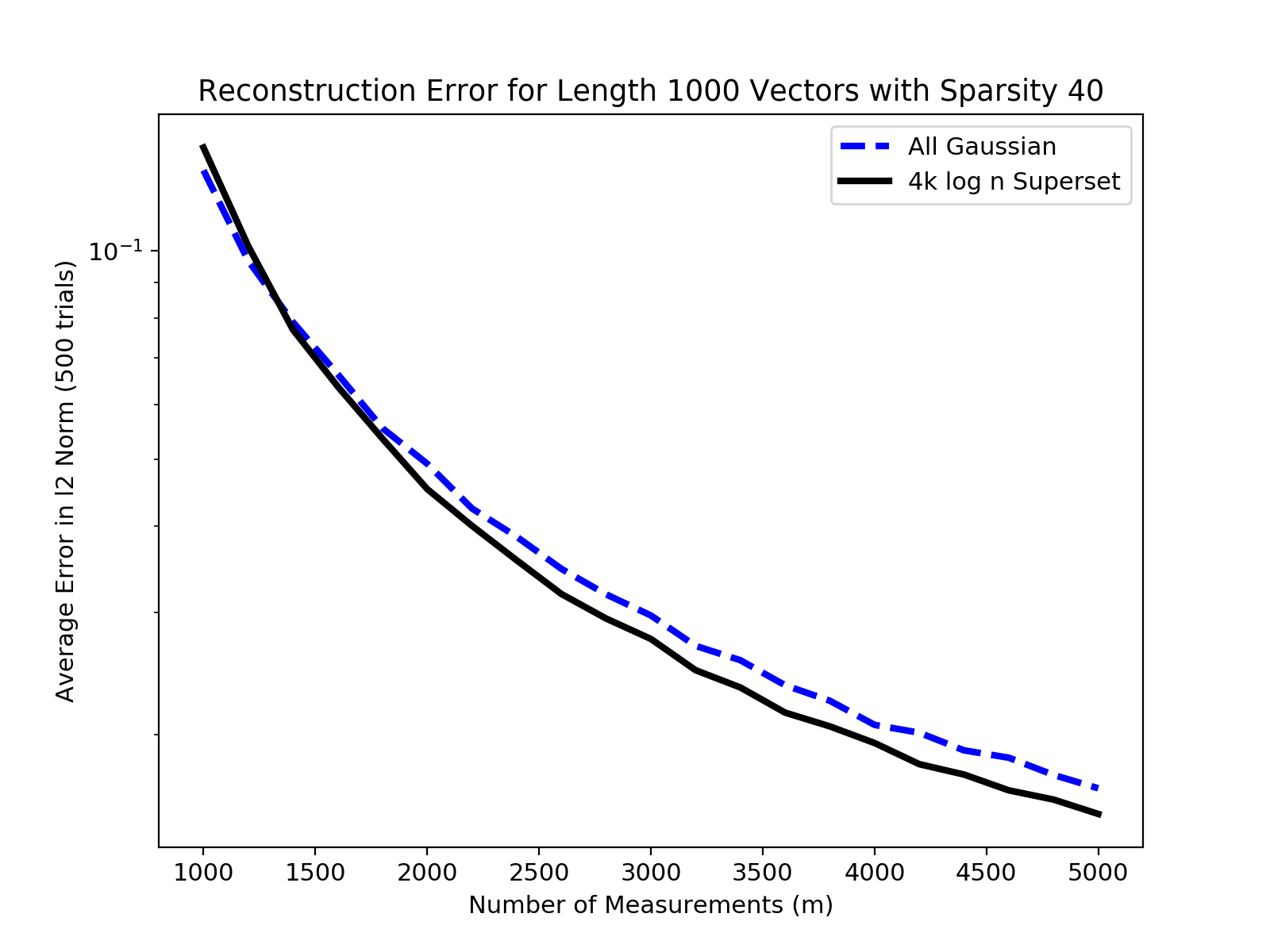}
                \caption{$n=1000, k=40$}
                \label{fig:k40}
        \end{subfigure}
        \caption{Average error of reconstruction for different sparsity levels with and without use of matrix for superset of support recovery}
        \label{fig:error_graphs}
\end{figure}

For the dotted lines in \cref{fig:error_graphs} labeled ``all Gaussian,'' for each value of $(n, m, k)$ we performed 500 trials in which we generated an $m \times n$ matrix with all entries in $\mathcal{N}(0, 1)$. We then used BIHT (run either until convergence or 1000 iterations, as there is no convergence guarantee) to recover the signal from the measurement matrix and measurement outcomes.

For the solid lines in \cref{fig:error_graphs} labeled ``$4k \log n$ Superset,'' we again performed 500 trials for each value of $(n, m, k)$ where in each trial we generated a measurement matrix $M = \begin{bmatrix} M^{(1)} \\ M^{(2)} \end{bmatrix}$ with $m$ rows in total. Each entry of $M^{(1)}$ is a Bernoulli random variable that takes value 1 with probability $\frac{1}{k + 1}$ and value 0 with probability $\frac{k}{k + 1}$; there is evidence from the group testing literature \cite{AtiaS12, DBLP:journals/tit/AldridgeBJ14} that this probability is near-optimal in some regimes, and it appears also to perform well in practice; see \cref{sec:empirical} for some empirical evidence. The entries of $M^{(2)}$ are drawn from $\mathcal{N}(0, 1)$. We use a standard group testing decoding (i.e., remove any coordinates that appear in a test with result 0) to determine a superset based on $\bfy_1 = \sign(M^{(1)} \bfx)$, then use BIHT (again run either until convergence or 1000 iterations) to reconstruct $\bfx$ within the superset using the measurement results $\bfy_2 = \sign(M^{(2)} \bfx)$. The number of rows in $M^{(1)}$ is taken to be $m_1 = 4k \log_{10}(n)$ based on the fact that with high probability $Ck \log n$ rows for some constant $C$ should be sufficient to recover an $\calO{k}$-sized superset, and the remainder $m_2 = (m - m_1)$ of the measurements are used in $M^{(2)}$.

We display data only for larger values of $m$, to ensure there are sufficiently many rows in both portions of the measurement matrix. From \cref{fig:error_graphs} one can see that in this regime, using a small number of measurements to first recover a superset of the support provides a modest improvement in reconstruction error compared to the alternative. In the higher-error regime when there are simply not enough measurements to obtain an accurate reconstruction, as can be seen in the left side of the graph in \cref{fig:k40}, the two methods perform about the same. In the empirical setting, our superset of support recovery technique can be viewed as a very flexible and low overhead method of extending other existing 1bCS algorithms which use only Gaussian measurements, which are quite common.

{\em Acknowledgements:} This research is supported in part by NSF CCF awards 1618512, 1642658, and 1642550 and the UMass Center for Data Science.

\bibliographystyle{plain}
\bibliography{superset}

\begin{thebibliography}{10}

\bibitem{ABK17}
Jayadev Acharya, Arnab Bhattacharyya, and Pritish Kamath.
\newblock Improved bounds for universal one-bit compressive sensing.
\newblock In {\em 2017 IEEE International Symposium on Information Theory
  (ISIT)}, pages 2353--2357. IEEE, 2017.

\bibitem{DBLP:journals/tit/AldridgeBJ14}
Matthew Aldridge, Leonardo Baldassini, and Oliver Johnson.
\newblock Group testing algorithms: Bounds and simulations.
\newblock {\em {IEEE} Trans. Information Theory}, 60(6):3671--3687, 2014.

\bibitem{AtiaS12}
George~K. Atia and Venkatesh Saligrama.
\newblock Boolean compressed sensing and noisy group testing.
\newblock {\em {IEEE} Trans. Information Theory}, 58(3):1880--1901, 2012.

\bibitem{DBLP:conf/ciss/BoufounosB08}
Petros Boufounos and Richard~G. Baraniuk.
\newblock 1-bit compressive sensing.
\newblock In {\em 42nd Annual Conference on Information Sciences and Systems,
  {CISS} 2008, Princeton, NJ, USA, 19-21 March 2008}, pages 16--21. {IEEE},
  2008.

\bibitem{Che13}
Mahdi Cheraghchi.
\newblock Noise-resilient group testing: Limitations and constructions.
\newblock {\em Discrete Applied Mathematics}, 161(1-2):81--95, 2013.

\bibitem{de2005optimal}
Annalisa De~Bonis, Leszek Gasieniec, and Ugo Vaccaro.
\newblock Optimal two-stage algorithms for group testing problems.
\newblock {\em SIAM Journal on Computing}, 34(5):1253--1270, 2005.

\bibitem{DBLP:journals/tit/Donoho06}
David~L. Donoho.
\newblock Compressed sensing.
\newblock {\em {IEEE} Trans. Information Theory}, 52(4):1289--1306, 2006.

\bibitem{du2000combinatorial}
D.~Du and F.~Hwang.
\newblock {\em Combinatorial Group Testing and Its Applications}.
\newblock Applied Mathematics. World Scientific, 2000.

\bibitem{GNJN13}
Sivakant Gopi, Praneeth Netrapalli, Prateek Jain, and Aditya Nori.
\newblock One-bit compressed sensing: Provable support and vector recovery.
\newblock In {\em International Conference on Machine Learning}, pages
  154--162, 2013.

\bibitem{DBLP:conf/ciss/HauptB11}
Jarvis~D. Haupt and Richard~G. Baraniuk.
\newblock Robust support recovery using sparse compressive sensing matrices.
\newblock In {\em 45st Annual Conference on Information Sciences and Systems,
  {CISS} 2011, The John Hopkins University, Baltimore, MD, USA, 23-25 March
  2011}, pages 1--6. {IEEE}, 2011.

\bibitem{JLBB13}
Laurent Jacques, Jason~N Laska, Petros~T Boufounos, and Richard~G Baraniuk.
\newblock Robust 1-bit compressive sensing via binary stable embeddings of
  sparse vectors.
\newblock {\em IEEE Transactions on Information Theory}, 59(4):2082--2102,
  2013.

\bibitem{landau1967sampling}
HJ~Landau.
\newblock Sampling, data transmission, and the nyquist rate.
\newblock {\em Proceedings of the IEEE}, 55(10):1701--1706, 1967.

\bibitem{DBLP:conf/aistats/Li16}
Ping Li.
\newblock One scan 1-bit compressed sensing.
\newblock In Arthur Gretton and Christian~C. Robert, editors, {\em Proceedings
  of the 19th International Conference on Artificial Intelligence and
  Statistics, {AISTATS} 2016, Cadiz, Spain, May 9-11, 2016}, volume~51 of {\em
  {JMLR} Workshop and Conference Proceedings}, pages 1515--1523. JMLR.org,
  2016.

\bibitem{Maz16}
Arya Mazumdar.
\newblock Nonadaptive group testing with random set of defectives.
\newblock {\em {IEEE} Trans. Information Theory}, 62(12):7522--7531, 2016.

\bibitem{DBLP:journals/tit/PlanV13}
Yaniv Plan and Roman Vershynin.
\newblock Robust 1-bit compressed sensing and sparse logistic regression: {A}
  convex programming approach.
\newblock {\em {IEEE} Trans. Information Theory}, 59(1):482--494, 2013.

\bibitem{PR11}
Ely Porat and Amir Rothschild.
\newblock Explicit nonadaptive combinatorial group testing schemes.
\newblock {\em {IEEE} Trans. Information Theory}, 57(12):7982--7989, 2011.

\bibitem{DBLP:conf/ita/ShiCGTN16}
Hao{-}Jun~Michael Shi, Mindy Case, Xiaoyi Gu, Shenyinying Tu, and Deanna
  Needell.
\newblock Methods for quantized compressed sensing.
\newblock In {\em 2016 Information Theory and Applications Workshop, {ITA}
  2016, La Jolla, CA, USA, January 31 - February 5, 2016}, pages 1--9. {IEEE},
  2016.

\bibitem{DBLP:conf/nips/SlawskiL15}
Martin Slawski and Ping Li.
\newblock b-bit marginal regression.
\newblock In Corinna Cortes, Neil~D. Lawrence, Daniel~D. Lee, Masashi Sugiyama,
  and Roman Garnett, editors, {\em Advances in Neural Information Processing
  Systems 28: Annual Conference on Neural Information Processing Systems 2015,
  December 7-12, 2015, Montreal, Quebec, Canada}, pages 2062--2070, 2015.

\bibitem{tibshirani1996regression}
Robert Tibshirani.
\newblock Regression shrinkage and selection via the lasso.
\newblock {\em Journal of the Royal Statistical Society: Series B
  (Methodological)}, 58(1):267--288, 1996.

\end{thebibliography}
\newpage
\appendix

\begin{center}
{\Large Supplementary Material: Superset Technique for Approximate Recovery in One-Bit Compressed Sensing}
\end{center}

\section{Proof of Lemma ~\ref{lem:eps_apx_real_vec}}
\label{sec:eps_apx_pf}

We start with a lemma that lower bounds the probability that a single random measurement fully separates two radius $\delta$ balls around points $\bfx$ and $\bfy$. We will make use of the following facts in the proof.

\begin{fact}
\label{fact:exp_lb}
For all $x \in \mathbb{R}$, $1 - x < e^{-x}$.
\end{fact}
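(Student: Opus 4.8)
The plan is to reduce the claim to the elementary inequality $e^t \ge 1 + t$ for real $t$, then apply it with $t = -x$. First I would record that $e^t$ is strictly convex on $\mathbb{R}$, since its second derivative $e^t$ is everywhere positive, so the graph of $e^t$ lies strictly above every tangent line except at the point of tangency. The tangent to $e^t$ at $t = 0$ is the line $1 + t$, hence $e^t > 1 + t$ for all $t \ne 0$, with equality precisely at $t = 0$. Substituting $t = -x$ gives $e^{-x} > 1 - x$ for all $x \ne 0$, which is the asserted inequality; at $x = 0$ both sides equal $1$, so the statement holds with ``$<$'' replaced by ``$\le$'', and this non-strict form (equivalently, the strict form restricted to $x \ne 0$) is what is actually used in the subsequent probability estimates.

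An alternative route, avoiding any appeal to convexity, is purely one-variable calculus: set $f(x) = e^{-x} - 1 + x$, so that $f'(x) = 1 - e^{-x}$. This derivative is negative for $x < 0$ and positive for $x > 0$, so $f$ is strictly decreasing on $(-\infty, 0]$ and strictly increasing on $[0, \infty)$, with $x = 0$ its unique global minimum. Since $f(0) = 0$, we obtain $f(x) > 0$ for every $x \ne 0$, i.e. $1 - x < e^{-x}$, and $f(0) = 0$ records the equality at the origin. Either argument is a couple of lines; one could even cite it as standard.

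I do not anticipate any genuine obstacle here — the statement is a textbook fact. The only point that needs care is the boundary case $x = 0$, where the two sides coincide, so the literal ``$<$ for all real $x$'' as written should be understood with the implicit restriction $x \ne 0$ (which is the regime in which the fact is invoked downstream), or else stated with ``$\le$''.
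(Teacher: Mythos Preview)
Your argument is correct; both the convexity route and the direct calculus route establish $e^{-x} \ge 1 - x$ with equality only at $x = 0$, and your remark about the boundary case is accurate and worth keeping.

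There is nothing to compare against: the paper states this as a \emph{Fact} and offers no proof whatsoever, treating it as standard. So your write-up is strictly more than what appears in the paper. If anything, the paper's phrasing with a strict inequality ``for all $x \in \mathbb{R}$'' is, as you note, literally false at $x=0$; your observation that the downstream applications only need the non-strict form (or the strict form away from the origin) is the right way to read it.
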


\begin{fact}
\label{fact:arccos_lb}
For all $x \in [0, 1]$, $\cos^{-1}(x) \ge \sqrt{2(1-x)}$.
\end{fact}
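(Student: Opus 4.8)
The plan for \cref{fact:arccos_lb} is to reduce the inequality to the elementary bound $\sin u \le u$ via a trigonometric substitution. Set $x = \cos\theta$ with $\theta = \cos^{-1}(x) \in [0,\pi/2]$ (this is the only place the hypothesis $x\in[0,1]$ enters; in fact the argument goes through for all $x\in[-1,1]$). The claim $\cos^{-1}(x)\ge\sqrt{2(1-x)}$ then becomes $\theta \ge \sqrt{2(1-\cos\theta)}$. Applying the half-angle identity $1-\cos\theta = 2\sin^2(\theta/2)$, the right-hand side equals $2|\sin(\theta/2)| = 2\sin(\theta/2)$ since $\theta/2\in[0,\pi/4]$, so the inequality is equivalent to $\theta/2 \ge \sin(\theta/2)$, i.e.\ to $u \ge \sin u$ with $u := \theta/2 \in [0,\pi/4]$.

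It therefore suffices to establish $\sin u \le u$ for $u\ge 0$, which I would do in the usual way: let $f(u) = u - \sin u$; then $f(0) = 0$ and $f'(u) = 1 - \cos u \ge 0$ for all $u$, so $f$ is nondecreasing on $[0,\infty)$ and hence $f(u)\ge 0$ there. Unwinding the substitution recovers the statement. An alternative route that avoids trigonometric substitution altogether is to differentiate $g(x) := \cos^{-1}(x) - \sqrt{2(1-x)}$ directly: a short computation gives $g'(x) = \tfrac{1}{\sqrt{1-x}}\big(\tfrac{1}{\sqrt 2} - \tfrac{1}{\sqrt{1+x}}\big)$, which is $\le 0$ on $[0,1)$ because $1+x\le 2$ there, so $g$ is nonincreasing on $[0,1]$ and $g(x)\ge g(1) = 0$. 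I would present the substitution version, since it isolates the single nontrivial ingredient, $\sin u \le u$.

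There is essentially no obstacle here: once the half-angle identity is applied, \cref{fact:arccos_lb} is a one-line consequence of $\sin u \le u$. The only minor bookkeeping is at the endpoint $x = 1$ (equivalently $\theta = 0$, $u = 0$), where both sides vanish and equality holds; this is closed off by continuity, since the strict parts of either argument are carried out on the half-open range. The main ``decision'' in writing up the proof is thus simply which of the two equivalent presentations above to give.
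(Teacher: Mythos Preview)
Your proof is correct and is essentially the same as the paper's, which simply cites the Taylor-series inequality $\cos y \ge 1 - y^2/2$ and substitutes $x = 1 - y^2/2$ (equivalently $y = \sqrt{2(1-x)}$) to conclude. Your half-angle reduction to $\sin u \le u$ is just an equivalent way of packaging that same inequality, so the two arguments coincide up to presentation.
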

\begin{proof}
Let $y \in [0,1]$. By Taylor series of $\cos(y)$, we know that 
$\cos(y) \ge 1 - \frac{y^2}{2}$. Therefore, setting $x = 1 - \frac{y^2}{2}$, the result follows. 
\end{proof}

\begin{lemma}
\label{lem:ball_sep_prob}
Let $\bfx$ and $\bfy$ be $k$-sparse unit vectors in $\R^n$ with $||\bfx - \bfy||_2 > \epsilon$, and take $\bfh \in \R^n$ to be a random vector with entries drawn i.i.d. from $\mathcal{N}(0,1)$. Write $B_\delta(\bfx)$ for the set $\set{\bfp \in \R^n : || \bfx - \bfp||_2 \leq \delta}$. Then
\begin{equation*}
\Pr[\forall \bfp \in B_\delta(\bfx), \forall \bfq \in B_\delta(\bfy), \sign(\bfh^T \bfp) \neq \sign(\bfh^T \bfq)] \geq \frac{\epsilon - 2\delta \sqrt{n}}{\pi}.
\end{equation*}
\end{lemma}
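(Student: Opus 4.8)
The plan is to reduce the ball‑separation event to a statement about where the random hyperplane $\{\bfh^T\bfz=0\}$ falls relative to $\bfx$ and $\bfy$, and then to estimate that probability as the classical random‑hyperplane quantity minus two small ``bad'' terms. For $\bfp\in B_\delta(\bfx)$, write $\bfp=\bfx+\delta\bfu$ with $\|\bfu\|_2\le 1$; then $\bfh^T\bfp$ ranges exactly over $[\bfh^T\bfx-\delta\|\bfh\|_2,\ \bfh^T\bfx+\delta\|\bfh\|_2]$, so $\sign(\bfh^T\bfp)$ is identically $+1$ on $B_\delta(\bfx)$ iff $\bfh^T\bfx>\delta\|\bfh\|_2$, identically $-1$ iff $\bfh^T\bfx<-\delta\|\bfh\|_2$, and similarly for $\bfy$. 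Writing $\hat\bfh\defeq\bfh/\|\bfh\|_2$ (a uniform point of $S^{n-1}$, defined a.s.), the event in the statement therefore contains
\[
\{\sign(\hat\bfh^T\bfx)\neq\sign(\hat\bfh^T\bfy)\}\cap\{|\hat\bfh^T\bfx|>\delta\}\cap\{|\hat\bfh^T\bfy|>\delta\},
\]
and a union bound gives
\[
\Pr[\text{separated}]\ \ge\ \Pr[\sign(\hat\bfh^T\bfx)\neq\sign(\hat\bfh^T\bfy)]\ -\ \Pr[|\hat\bfh^T\bfx|\le\delta]\ -\ \Pr[|\hat\bfh^T\bfy|\le\delta].
\]

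For the main term I would invoke the standard identity: $(\bfh^T\bfx,\bfh^T\bfy)$ is a centered bivariate Gaussian with correlation $\inprod{\bfx}{\bfy}$, so its coordinates have opposite signs with probability exactly $\theta/\pi$, where $\theta\defeq\arccos\inprod{\bfx}{\bfy}\in[0,\pi]$; and since $\bfh^T\bfx$ and $\hat\bfh^T\bfx$ share a sign, this is also $\Pr[\sign(\hat\bfh^T\bfx)\neq\sign(\hat\bfh^T\bfy)]$. Using $\|\bfx-\bfy\|_2^2=2(1-\inprod{\bfx}{\bfy})$ together with \cref{fact:arccos_lb}, $\theta=\arccos\inprod{\bfx}{\bfy}\ge\sqrt{2(1-\inprod{\bfx}{\bfy})}=\|\bfx-\bfy\|_2>\epsilon$, so the main term is at least $\epsilon/\pi$.

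For each bad term, rotational symmetry makes $\hat\bfh^T\bfx$ distributed as one coordinate of a uniform point of $S^{n-1}$, with density on $[-1,1]$ proportional to $(1-t^2)^{(n-3)/2}$; bounding that factor by $1$ and estimating the normalizing constant $\Gamma(n/2)/(\sqrt\pi\,\Gamma((n-1)/2))=\calO{\sqrt n}$ (equivalently, lower‑bounding $\|\bfh\|_2$ via a Chernoff tail that uses \cref{fact:exp_lb}) yields $\Pr[|\hat\bfh^T\bfx|\le\delta]\le\delta\sqrt n/\pi$, and likewise for $\bfy$. Combining the three estimates gives $\Pr[\text{separated}]\ge \epsilon/\pi-2\delta\sqrt n/\pi=(\epsilon-2\delta\sqrt n)/\pi$, as claimed.

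The step I expect to be the main obstacle is the ``bad'' probability: getting the clean constant in $\Pr[|\hat\bfh^T\bfx|\le\delta]\le\delta\sqrt n/\pi$, since the $n$‑dependence must be extracted either from the $\Gamma$‑ratio of the sphere marginal or from a concentration bound on $\|\bfh\|_2$, and one wants the cheapest estimate that still closes the arithmetic. The remaining points are routine: the first reduction is only an inclusion, but the complementary boundary event (some $\bfh^T\bfp=0$) has probability $0$; and if $\inprod{\bfx}{\bfy}<0$ one should bound $\theta$ using $\theta\ge 2\sin(\theta/2)=\|\bfx-\bfy\|_2$ directly rather than via \cref{fact:arccos_lb}.
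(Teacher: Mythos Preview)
Your proposal is correct and follows essentially the same route as the paper: the same three-event decomposition (separating hyperplane plus two ``margin'' events), the same $\arccos$/random-hyperplane identity combined with \cref{fact:arccos_lb} for the main term, and the same sphere-marginal density bound for the bad terms. The only cosmetic difference is that the paper reaches the $\sqrt{n}/\pi$ density bound by rewriting the sphere marginal as a $\mathrm{Beta}((n-1)/2,(n-1)/2)$ distribution and applying Stirling, whereas you work with the $\Gamma$-ratio of the marginal directly; these are the same computation.
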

\begin{proof}
Define $E$ to be the event above, that $\bfh$ fully separates $B_\delta(\bfx)$ and $B_\delta(\bfy)$. Note that in order for $E$ to occur, three things are necessary, which we will refer to as $E_1, E_2$, and $E_3$, respectively:
\begin{enumerate}
\item $\bfh$ separates $\bfx$ and $\bfy$ (i.e. $\sign(\bfh^T \bfx) \neq \sign(\bfh^T \bfy)$).
\item $\bfx$ is at least $\delta$-far from the hyperplane with $\bfh$ as its perpendicular vector.
\item $\bfy$ is at least $\delta$-far from the hyperplane with $\bfh$ as its perpendicular vector.
\end{enumerate}

As $\bfh$ is chosen independent of $\bfx$ and $\bfy$, we know $\Pr[E_2] = \Pr[E_3]$. Thus
\begin{equation*}
\Pr[E] \geq \Pr[E_1 \cap E_2 \cap E_3] = 1 - \Pr[E_1^c \cup E_2^c \cup E_3^c] \geq 1 - \Pr[E_1^c] - 2\Pr[E_2^c].
\end{equation*}

The probability of $E_1^c$ is directly proportional to the angle between $\bfx$ and $\bfy$. Specifically, we have
\begin{eqnarray*}
\Pr[E_1^c] =& 1 - \frac{\cos^{-1}(\bfx^T \bfy)}{\pi} &\\
<& 1 - \frac{\cos^{-1}(1 - \epsilon^2/2)}{\pi} &\\
\leq& \exp(-\frac{\cos^{-1}(1 - \epsilon^2/2)}{\pi}) & \textrm{(from \cref{fact:exp_lb}})\\
\leq& \exp(-\frac{\epsilon}{\pi}) &\textrm{(from \cref{fact:arccos_lb}}).\\
\end{eqnarray*}
Then by again using \cref{fact:exp_lb},
\begin{equation*}
1 - \Pr[E_1^c] \geq \frac{\epsilon}{\pi}.
\end{equation*}

Next we turn to $E_2$.

If we set $Y = \frac{\langle \bfx, \bfh \rangle}{||\bfx||_2 ||\bfh||_2} = \langle \bfx, \frac{\bfh}{||\bfh||_2} \rangle$, $Y$ is simply the inner product of a uniformly random vector on a $n-1$ dimensional sphere with a fixed unit vector. It is known that in this case, $X = \frac{Y + 1}{2}$ is a random variable drawn from a Beta distribution with parameters $\left(\frac{n-1}{2}, \frac{n-1}{2}\right)$. Then we have

\begin{eqnarray*}
\Pr[E_2^c] =& \Pr[|Y| \leq \delta] \\
=& \Pr[-\delta \leq 2X - 1 \leq \delta] \\
=& \Pr[\frac{1- \delta}{2} \leq X \leq \frac{1+\delta}{2}]\\
\leq& \delta f\left(\frac{1}{2}\right),
\end{eqnarray*}
where $f$ is the PDF of the Beta$\left(\frac{n-1}{2}, \frac{n-1}{2}\right)$ distribution, and the last inequality follows because the width of the interval is $\delta$ and the mode of this distribution is $\frac{1}{2}$. Now we seek to upper bound $f\left(\frac{1}{2}\right)$.

By definition,
\begin{equation*}
f\left(\frac{1}{2}\right) = \frac{(1/2)^{(n-3)/2} (1-1/2)^{(n-3)/2}}{\left( \Gamma\left(\frac{n-1}{2}\right) \cdot \Gamma\left( \frac{n-1}{2}\right)\right) / \Gamma(n-1)} = \left(\frac{1}{2}\right)^{n-3} \frac{\Gamma(n-1)}{\left( \Gamma \left( \frac{n-1}{2}\right)\right)^2}.
\end{equation*}

Now we can apply the following form of Stirling's formula that holds for all $n$:
\begin{equation*}
\sqrt{2 \pi} n^{n+(1/2)} e^{-n} \leq \Gamma(n+1) \leq n^{n+(1/2)} e^{-n+1},
\end{equation*}
which yields
\begin{eqnarray*}
f\left(\frac{1}{2}\right) \leq& \left( \frac{1}{2} \right)^{n-3} \frac{(n-2)^{n-(3/2)} e^{-n+3}}{(2 \pi) \left( \left(n-3\right)/2 \right)^{n-2} e^{-n+3}} \\
=& \frac{(n-2)^{n-3/2}}{\pi (n-3)^{n-2}} \\
\leq& \frac{(n-2)^{n-3/2}}{\pi (n-2)^{n-2}} \\
=& \frac{\sqrt{n}}{\pi},
\end{eqnarray*}
and thus
\begin{equation*}
\Pr[E_2^c] \leq \frac{\delta \sqrt{n}}{\pi}.
\end{equation*}

Combining together, we have
\begin{equation*}
\Pr[E] \geq 1 - \Pr[E_1^c] - 2\Pr[E_2^c] \geq \frac{\epsilon}{\pi} - \frac{2 \delta \sqrt{n}}{\pi},
\end{equation*}
as desired.

\end{proof}

Now suppose we have a cover of all $k$-sparse points on an $n$-dimensional sphere by a $\delta$-net such that each point is within distance $\delta$ of a net point. Then if any two sparse points on the sphere are at distance at least $\epsilon$ from each other, the closest net points to each one are at distance at least $\epsilon - 2 \delta$ from each other. If we can guarantee that the $\delta$-balls around every pair of points in our cover at distance at least $\epsilon' = \epsilon - 2 \delta$ from each other are separated by some measurement, then we will know that any unit vectors $\bfx$ and $\bfy$ at distance at least $\epsilon$ will have different measurement results; this is a necessary and sufficient condition for $\epsilon$-approximate recovery.

\begin{lemma*}[\cref{lem:eps_apx_real_vec}]
Taking $A \in \R^{m \times n}$ to have all entries drawn i.i.d. from $\mathcal{N}(0,1)$ yields a measurement matrix suitable for universal $\epsilon$-approximate recovery of $k$-sparse unit vectors in $\R^n$ with high probability, provided that
\begin{equation*}
m = \Omega\Big(\frac{k}{\epsilon}\log \frac{n^{3/2}}{k\epsilon}\Big).
\end{equation*}
\end{lemma*}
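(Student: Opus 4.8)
The plan is to discretize the sphere with a $\delta$-net and then invoke \cref{lem:ball_sep_prob} together with a union bound, exactly along the lines of the paragraph preceding the restated lemma. First I would build a $\delta$-net $\mathcal{N}$ of all $k$-sparse unit vectors in $\R^n$: for each of the $\binom{n}{k}$ possible supports take a standard $\delta$-cover of the unit sphere of that $k$-dimensional coordinate subspace, using at most $(1+2/\delta)^k$ points, so that $|\mathcal{N}| \le \binom{n}{k}(1+2/\delta)^k$ and $\log|\mathcal{N}| = \calO{k\log\frac{n}{k\delta}}$. The decoder, on input $\sign(A\bfx)$, returns any net point $\bfx'$ that is \emph{consistent} with the observed sign vector, meaning that on every coordinate $i$ for which the sign of the $i$th measurement is constant over the ball $B_\delta(\bfx')$, the observed sign agrees with that constant.

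For correctness, let $\bfx$ be the true signal and $\bfx^*\in\mathcal{N}$ a net point with $\|\bfx-\bfx^*\|_2\le\delta$; then $\bfx^*$ is consistent, so the decoder always outputs something. It therefore suffices to guarantee that, with high probability over $A$, every net point $\bfx'$ with $\|\bfx'-\bfx^*\|_2 > \epsilon':=\epsilon-2\delta$ is \emph{inconsistent}. Fix such a pair: by \cref{lem:ball_sep_prob} a single Gaussian row fully separates $B_\delta(\bfx^*)$ and $B_\delta(\bfx')$ --- in particular the corresponding sign is constant on each ball and opposite on the two --- with probability at least $(\epsilon'-2\delta\sqrt n)/\pi$, so over $m$ independent rows the probability that no row separates this pair is at most $\exp\!\big(-m(\epsilon'-2\delta\sqrt n)/\pi\big)$. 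A union bound over the at most $|\mathcal{N}|^2$ such pairs bounds the failure probability by $|\mathcal{N}|^2\exp\!\big(-m(\epsilon'-2\delta\sqrt n)/\pi\big)$. Off this bad event the decoder returns a net point within $\epsilon'$ of $\bfx^*$, hence within $\epsilon'+\delta=\epsilon-\delta<\epsilon$ of $\bfx$, which (all vectors in sight being unit) is $\epsilon$-approximate recovery.

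It remains to pick $\delta$ and extract $m$. Choosing $\delta=\Theta(\epsilon/\sqrt n)$, say $\delta=\epsilon/(8\sqrt n)$, keeps $\epsilon'-2\delta\sqrt n\ge\epsilon/2$, so the per-pair bound is $\exp(-\Omega(m\epsilon))$ while $\log|\mathcal{N}|=\calO{k\log\frac{n}{k\delta}}=\calO{k\log\frac{n^{3/2}}{k\epsilon}}$; forcing $|\mathcal{N}|^2\exp(-\Omega(m\epsilon))=o(1)$ then requires exactly $m=\Omega\!\big(\frac1\epsilon\log|\mathcal{N}|\big)=\Omega\!\big(\frac{k}{\epsilon}\log\frac{n^{3/2}}{k\epsilon}\big)$, as stated. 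The one genuinely delicate step is the choice of net resolution: \cref{lem:ball_sep_prob} degrades linearly in $\epsilon-2\delta\sqrt n$, which forces $\delta=O(\epsilon/\sqrt n)$, and it is precisely this extra $\sqrt n$ inside the covering number that is responsible for the $n^{3/2}$ appearing in the final bound. Everything else --- covering numbers of spheres, counting supports, the union bound --- is routine.
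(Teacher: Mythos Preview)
Your proposal is correct and follows essentially the same route as the paper: build a $\delta$-net of the $k$-sparse unit sphere of size $\binom{n}{k}(C/\delta)^k$, invoke \cref{lem:ball_sep_prob} on pairs of net points at distance $>\epsilon-2\delta$, union bound over the $|\mathcal{N}|^2$ pairs and $m$ rows, and set $\delta=\Theta(\epsilon/\sqrt{n})$ so that the separation probability stays $\Omega(\epsilon)$ while $\log|\mathcal{N}|=\calO{k\log\frac{n^{3/2}}{k\epsilon}}$. The only cosmetic difference is that you spell out an explicit net-based consistency decoder, whereas the paper just notes that separating all far-apart pairs is sufficient for $\epsilon$-approximate recovery; both are valid and lead to the same bound.
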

\begin{proof}

In order to $\delta$-cover all vectors with a fixed size $k$ support $\left(\frac{3}{\delta}\right)^k$ points suffices, and there are $\binom{n}{k}$ such supports, so in total our net will require
\begin{equation*}
S = \binom{n}{k} \left( \frac{3}{\delta}\right)^k
\end{equation*}
points.

Call the lower bound on the probability of two $\delta$-balls of points at distance at least $\epsilon$ being fully separated (computed in \cref{lem:ball_sep_prob}) $p_{\epsilon,\delta,k}$. For two particular net points the probability of the bad event that their $\delta$-balls are not fully separated by a particular measurement is at most $(1 - p_{\epsilon',\delta,k})$. Since there are $m$ such measurements generated independently, the probability no measurement fully separates the $\delta$-balls is at most $(1 - p_{\epsilon',\delta,k})^m$.

We can then union bound over all pairs of net points, and the probability that the $\delta$-balls around any pair of net points are not fully separated is at most
\begin{equation}
\label{eq:UB_total_error_prob}
S^2 (1 - p_{\epsilon',\delta,k})^m \leq \left(\frac{n e}{k}\right)^{2k} \left(\frac{3}{\delta}\right)^{2k} (1 - p_{\epsilon',\delta,k})^m.
\end{equation}

If we take $\delta = \frac{\epsilon}{3(\sqrt{n}+1)}$, then by \cref{lem:ball_sep_prob} we have
\begin{eqnarray*}
p_{\epsilon',\delta,k} \geq& \frac{\epsilon - 2 \delta (1 + \sqrt{n})}{\pi} \\
\geq& \frac{\epsilon - 2(\epsilon/(3(1+\sqrt{n}))(1+\sqrt{n})}{\pi}\\
=& \frac{\epsilon}{3\pi} \\
\geq& \frac{\epsilon}{10}.
\end{eqnarray*}
Substituting this back into \cref{eq:UB_total_error_prob}, our error probability is at most
\begin{equation*}
\left(\frac{n e}{k}\right)^{2k} \left(\frac{3}{\delta}\right)^{2k} \left(1 - \frac{\epsilon}{10}\right)^m.
\end{equation*}

Now if we choose $\eta$ to be our maximum allowable error probability and substitute $\delta = \frac{\epsilon}{3(1+\sqrt{n})}$, we have
\begin{equation*}
\eta \geq \left(\frac{n e}{k}\right)^{2k} \left(\frac{9 (\sqrt{n} + 1)}{\epsilon}\right)^{2k} \left(1 - \frac{\epsilon}{10}\right)^m
\end{equation*}
Taking the log of both sides and using the inequality $1 - \frac{\epsilon}{10} \leq \exp(-\epsilon/10)$ (\cref{fact:exp_lb}), we have
\begin{equation*}
\log \eta \geq 2k \log \frac{ne}{k} + 2k \log \frac{9 (\sqrt{n} + 1)}{\epsilon} - \frac{m \epsilon}{10},
\end{equation*}
and therefore it suffices to have, 
\begin{eqnarray*}
m \geq& \frac{10}{\epsilon} \left( 2k \log \frac{ne}{k} + 2k \log \frac{9 (\sqrt{n} + 1)}{\epsilon} + \log \frac{1}{\eta}\right) \\
=& \frac{10}{\epsilon} \left( 2k \log \frac{9en^{3/2} + 9en}{k\epsilon} + \log \frac{1}{\eta}\right).
\end{eqnarray*}

\end{proof}

\section{Proof of Lemma ~\ref{lem:GT_CS_conversion}}
\label{sec:lem12_pf}

\begin{lemma*}[\cref{lem:GT_CS_conversion}]
Suppose we have a known measurement vector $\mathbf{m} \in \set{0, 1}^n$, and an unknown $k$-sparse signal $\mathbf{x} \in \R^n$. The signal $\mathbf{x}$ is generated randomly by first picking a subset of size $k$ from $[n]$ (using any distribution) to be the support, then taking $\bfx$ to be a uniformly random vector on the sphere on those $k$ coordinates. Then from $\sign(\bfm^T \bfx)$, we can determine the value of $\mathbf{m} \odot \mathbf{x}$ with probability 1.
\end{lemma*}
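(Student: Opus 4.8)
The plan is to condition on the (random) support $T\subseteq[n]$ of $\bfx$ and reduce to \cref{lem:group_testing_nec_cond}. Fix an arbitrary outcome of $T$ and set $S=\supp{\bfm}\cap T$. If $S=\emptyset$ there is nothing to check: then $\bfm^T\bfx=0$, so $\s{\bfm^T\bfx}=0$ and $\bfm\odot\bfx=0$, which is precisely the case already covered by \cref{lem:group_testing_nec_cond}. So assume $S\neq\emptyset$. Since $\bfm$ has $0/1$ entries and every coordinate of $S$ lies in $\supp{\bfm}$, we have $\bfm^T|_S\,\bfx|_S=\sum_{i\in S}\bfx_i=\langle\mathbf{1}_S,\bfx|_T\rangle$, a fixed linear functional (with nonzero coefficient vector $\mathbf{1}_S$, as $S\neq\emptyset$) evaluated at $\bfx|_T$.

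The key step is to argue this inner product is nonzero almost surely. Conditioned on $T$, the vector $\bfx|_T$ is distributed according to the rotation-invariant (surface) measure on the unit sphere $S^{k-1}\subseteq\R^T$. The set $\set{v\in S^{k-1}:\langle\mathbf{1}_S,v\rangle=0}$ is the intersection of $S^{k-1}$ with a hyperplane through the origin, hence a subsphere of dimension at most $k-2$; this has measure zero with respect to the uniform measure on $S^{k-1}$ (in the degenerate case $k=1$ the functional equals $\pm\bfx_i$, which is never zero, so the statement is immediate). Therefore $\Pr[\,\bfm^T|_S\,\bfx|_S=0\mid T\,]=0$.

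Averaging over $T$ to remove the conditioning (supports $T$ with $S=\emptyset$ contribute nothing to the bad event) gives that, with probability $1$, either $S=\emptyset$ or $\bfm^T|_S\,\bfx|_S\neq0$ — exactly the hypothesis of \cref{lem:group_testing_nec_cond}. Invoking that lemma, $\s{\bfm^T\bfx}$ determines $\bfm\odot\bfx$ with probability $1$, as claimed.

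I expect the only genuine content to be the observation that a nontrivial great subsphere is null for the uniform measure on the sphere; the rest is bookkeeping plus the appeal to the already-established \cref{lem:group_testing_nec_cond}. The one point requiring mild care is that ``uniformly random vector on the sphere on those $k$ coordinates'' really does mean the rotation-invariant surface measure, so that the ambient measure-zero set is indeed null — but this is immediate from the definition, and no union bound is needed here since the statement concerns a single measurement vector $\bfm$ (the extension to an arbitrary finite matrix follows afterward by a union bound over its rows).
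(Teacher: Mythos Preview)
Your proof is correct and the core content---that a nontrivial linear functional vanishes on a set of surface-measure zero on $S^{k-1}$---is the same as the paper's. The presentations differ in two minor ways worth noting. First, you explicitly reduce to \cref{lem:group_testing_nec_cond}, which makes the argument modular; the paper instead reproves the case analysis from scratch. Second, for the measure-zero step you argue geometrically (the zero set is a great subsphere of lower dimension), whereas the paper passes through the Gaussian representation of the uniform sphere distribution and observes that one coordinate $Z_1$ would have to equal a specific function of the remaining independent Gaussians, which occurs with probability zero. Both arguments are standard and equally valid; yours is slightly more direct, while the paper's makes the independence structure explicit.
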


\begin{proof}
We assume without loss of generality that $\bfx$ is supported on the first $k$ coordinates; the remainder of the argument does not depend specifically on the choice of support, so this is purely for notational convenience. If $\sign(\mathbf{m}^T  \mathbf{x}) \neq 0$, then immediately we must have $\mathbf{m} \odot \mathbf{x} = 1$, as $\mathbf{m}^T \mathbf{x} \neq 0$.

Otherwise if $\sign(\mathbf{m}^T \mathbf{x}) = 0$, we must have $\mathbf{m}^T \mathbf{x} = 0$. This leaves two cases: either $\mathbf{m} \odot \mathbf{x} = 0$, or $\mathbf{x}$ is orthogonal to $\mathbf{m}$ and $\mathbf{m} \odot \mathbf{x} = 1$. In the latter case $\mathbf{x}$ satisfies the equation
\begin{equation*}
\sum_{i=1}^k \mathbf{m}_i \mathbf{x}_i = 0 \iff \mathbf{m}_1 \mathbf{x}_1 = -\left( \sum_{i=2}^k \mathbf{m}_i \mathbf{x}_i \right).
\end{equation*}

Let $\bfz$ be a random vector formed by using the same distribution as that used to determine the support of $\bfx$ in order to determine the support, then within that support drawing $k$ variables $Z_i \sim \mathcal{N}(0, 1)$ to be the $k$ coordinates, and finally rescaling so that $||\bfz||_2 = 1$. It is well-known that the distribution of such $\bfz$ is identical to the distribution of $\bfx$, thus the probability that $\bfz$ is orthogonal to $\bfm$ is the same as the probability that $\bfx$ is orthogonal to $\bfm$. We proceed by showing the probability $\bfz$ is orthogonal to $\bfm$ is 0.

If $\bfz$ is orthogonal to $\bfm$, then as above we must have
\begin{align*}
&\frac{\mathbf{m}_1 Z_1}{||\mathbf{z}||_2} = -\frac{\left( \sum_{i=2}^k \mathbf{m}_i Z_i \right)}{||\mathbf{z}||_2} \\
\implies& Z_1 = -\left( \sum_{i=2}^k \mathbf{m}_i Z_i \right) / \mathbf{m}_1.
\end{align*}
Thus in order for $\mathbf{z}$ to lie in the nullspace of $\mathbf{m}$, it is necessary that $Z_1$ takes a specific value determined by the other $k-1$ $Z_i$; as $Z_1$ is drawn independently of the other $Z_i$ and from a continuous distribution, this happens with probability 0. We conclude that the same is true for $\mathbf{x}$, and thus when $\sign(\mathbf{m}^T \mathbf{x}) = 0$ we assume that $\mathbf{m} \odot \mathbf{x} = 0$, and are correct with probability 1.
\end{proof}


\section{Empirical Evidence for Experimental Choice of Bernoulli Probability}
\label{sec:empirical}

In this section, we provide some empirical evidence that the choice of $\frac{1}{k+1}$ for the Bernoulli probability of the experiments in \cref{sec:experiments} is reasonable.

\begin{figure}
        \begin{subfigure}[H]{0.33\textwidth}
                \centering
                \includegraphics[width=.85\linewidth]{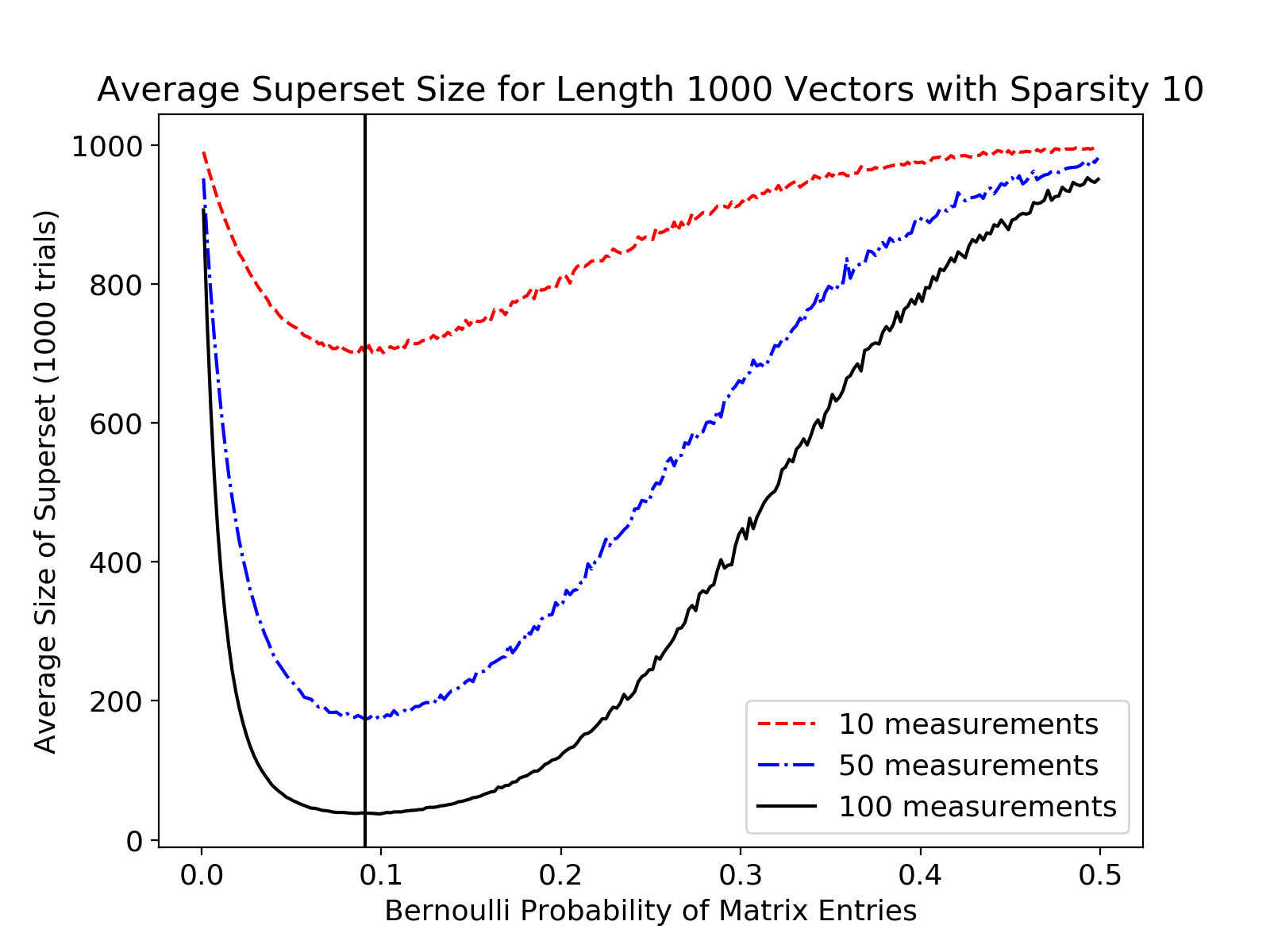}
                \caption{$n=1000, k=10$}
                \label{fig:bprob_k10}
        \end{subfigure}%
        \begin{subfigure}[H]{0.33\textwidth}
                \centering
                \includegraphics[width=.85\linewidth]{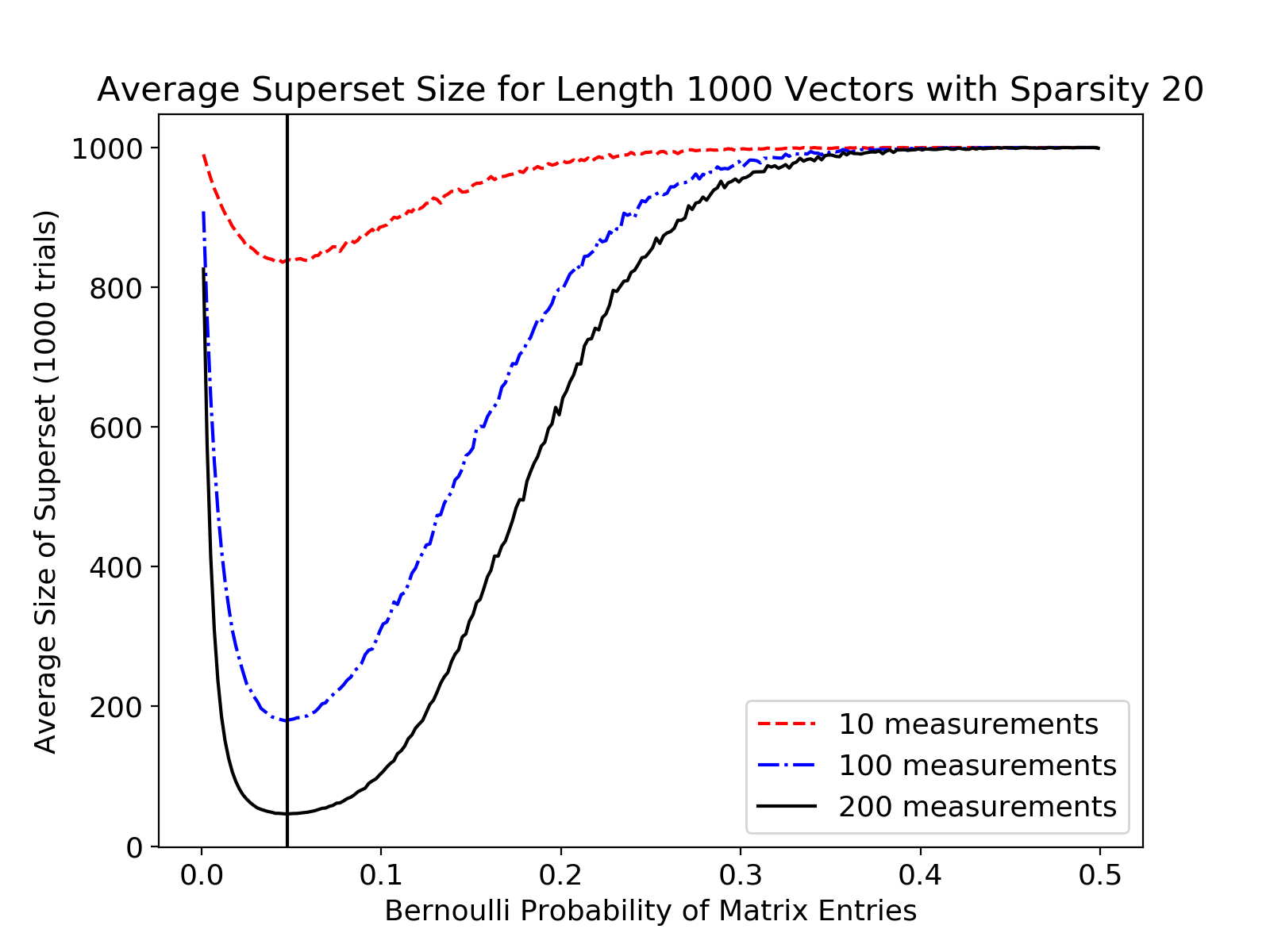}
                \caption{$n=1000, k=20$}
                \label{fig:bprob_k20}
        \end{subfigure}%
        \begin{subfigure}[H]{0.33\textwidth}
                \centering
                \includegraphics[width=.85\linewidth]{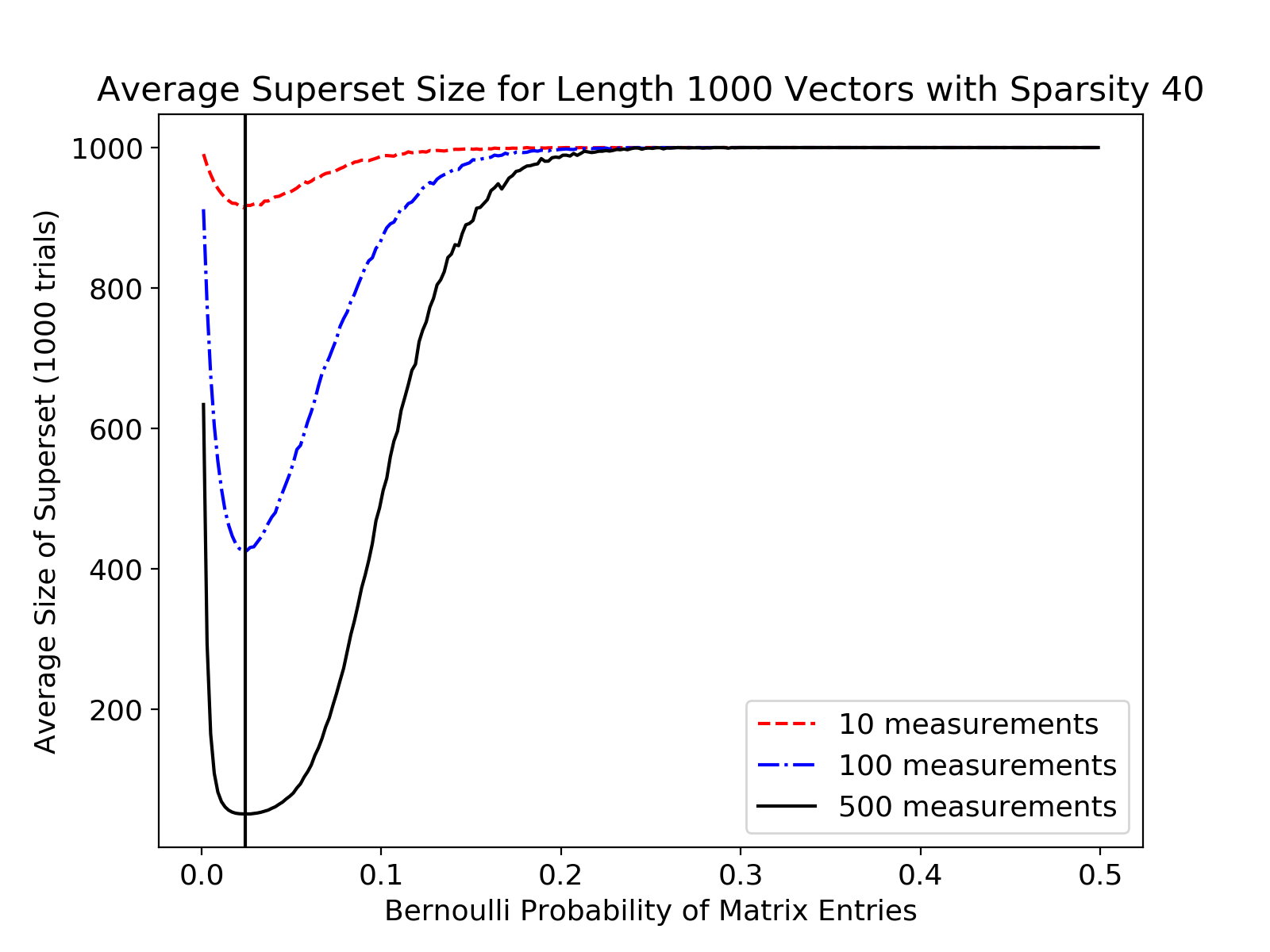}
                \caption{$n=1000, k=40$}
                \label{fig:bprob_k40}
        \end{subfigure}%
        \caption{Average size of superset following group testing decoding for different sparsity levels as Bernoulli probability of measurement matrix varies. Vertical line highlights $\frac{1}{k+1}$.}
        \label{fig:bern_probs}
\end{figure}

\Cref{fig:bern_probs} shows the average size of the superset using a matrix with Bernoulli entries (i.e. each value is 1 with probability $p$ and 0 otherwise) following a group testing decoding. The different lines represent different numbers of measurements used in the Bernoulli matrix, and different plots show different sparsity levels. All vectors had length 1000, and were constructed randomly by first choosing a size $k$ support set uniformly at random, then drawing a random value from $\mathcal{N}(0, 1)$ for each coordinate in the support set and normalizing so that $||\bfx||_2 = 1$. 1000 trials were performed for each tuple $(n, k, p)$ of values.

The vertical line overlaid atop the other curves in \cref{fig:bern_probs} indicates where the Bernoulli probability is equal to $\frac{1}{k+1}$. For all three sparsity levels, it appears that this value is very close to achieving the minimum size superset for a given number of measurements. Furthermore, the fact that the curves all have relatively wide basins around the minimum indicates that any value close to the minimum should perform fairly well.

\section{Sufficient Condition for Universal Support Recovery of Real Vectors}
\label{sec:suff_cond_reals}

The goal in this section is to give sufficient conditions on a measurement matrix in order to be able to recover a superset of the support of an unknown $k$-sparse signal $\bfx \in \R^n$ using 1-bit sign measurements, by generalizing the definition of ``Robust UFF'' given in \cite{ABK17}.

In this section we will work primarily with matrix columns rather than rows, so to this end for any matrix $B \in \R^{m \times n}$, here we let $B_j$ denote its $j$-th column. For any sets $X \subseteq [m]$ and $Y \subseteq [n]$, let $B[X:Y]$ denote the submatrix of $B$ restricted to rows indexed by $X$ and columns indexed by $Y$. Let $\wt(\bfx)$ denote the size of the support of $x$, i.e. $\wt(\bfx) = |\supp{\bfx}|$. We say $\bfx$ has full support if $\wt(\bfx) = n$. 

 In order to recover the superset of the support of $x$ using the sign measurements $\sign(B \bfx) \in \{-1, 0, 1\}^m$, we use the algorithm of \cite{ABK17} (Algorithm~\ref{alg:abk}). 
 For any subset of $k$ columns $S \subset [n]$, $|S| \le k$, define $T_S := \set{j \in [n]\setminus S : | \supp{B_j} \cap \left( \cup_{i \in S} \supp{B_i} \right) | \ge \frac 12 \wt(B_j)}$. These are the columns outside of the subset $S$ that have large intersection with the union of the $k$ columns indexed by $S$. 

\cite{ABK17} show that if $B$ is a Robust UFF with sufficient parameters, then their algorithm recovers the exact support of $x$.  Algorithm~\ref{alg:abk} computes the intersection of the support of each column $B_j$ with the output $b:=\sign(B \bfx)$. It includes the index $j$ in the estimated support if the intersection is sufficiently large. The property of a Robust UFF ensures that the estimated support is exactly the support of $\bfx$. 
\begin{algorithm}
\begin{algorithmic}
\STATE Input: $B:(n, m, d, k, 1/2)$-Robust UFF, $\bfx \in \R^n$, unknown $k$-sparse vector.
\STATE Let $\bfb:=\sign(Bx)$.
\STATE $\hat{S} = \emptyset$.
\STATE for $j \in [n]$,
\STATE \quad if $ |\supp{B_j} \cap \supp{\bfb} | > \frac d2$, 
\STATE \qquad $\hat{S} \leftarrow \{ j \}$
\STATE Return $\hat{S}$.
\end{algorithmic} 
\caption{Support recovery via Robust-UFF} 
\label{alg:abk}
\end{algorithm}
 
We relax the definition of an $(n, m, d, k, \alpha)$-Robust UFF to allow a few false positives, since we only require a superset of the support of $x$ rather than the exact support.  The allowable size of $T_S$ controls the number of false positives. Note that allowing $| T_S | \geq 1$ might induce some false negatives as well, thus to avoid this possibility we need to ensure that no column of $B$ in the support of $\bfx$ has too many zero test results. In general, zero test results can occur when $\bfx$ lies in the nullspace of many rows of $B$ that have a nonempty intersection with the support of $\bfx$. We construct the matrix $B$ to avoid such situations. 

For any subset $S \subseteq [n]$, and any $j \in T_S$, define $L_{S, j} := \{ t \in \supp{B_j} \cap (\cup_{i \in S} \supp{B_i}  )\} \subseteq [m]$. These are the rows in the support of $B_j$ that intersect with the support of the columns of $B$ indexed by $S$. In order to ensure that the algorithm does not introduce any false negatives, we want the output vector $\bfb$ to have not many zeros in rows corresponding to $L_{S, j}$.  
Let us define $A_{S, j} := B[L_{S, j}: S \cup \{j\}]$ to be the matrix restricted to the rows in $L_{S, j}$ and columns of $S \cup \{j\}$. Note that since $j \in T_S$, $|L_{S, j} | \ge \frac {\wt(B_j)}{2} $, therefore $A_{S,j}$ has at least $\frac {\wt(B_j)}{2}$ rows.
We now define a list-Robust UFF as follows: 

\begin{definition}[List-RUFF]
\label{def:list_ruff}
A real matrix $B \in \R^{m\times n}$ is called an $(m, n, d, k, 1/2, \ell)$-list Robust UFF if $\wt(B_j) = d$ for all $j \in [n]$, and for all subsets $S \subseteq [n]$, $ |S| \le k$, the following properties hold:
\begin{enumerate}
\item  $|T_S| < \ell$. 
\item  For any $j \in T_S$, and any $\bfx \in \R^{|S|}$ with full support,  
$\wt(A_{S, j} \bfx) > |L_{S, j}| - \frac12 \wt(B_j) $.
\end{enumerate}
\end{definition}

The first condition ensures that the Algorithm~\ref{alg:abk} introduces at most $\ell$ false positives. The second condition is used to ensure that no $k$-sparse vector $\bfx$ is in the nullspace of too many rows of $B$, and therefore Algorithm~\ref{alg:abk} will not yield any false negatives. 

Next we show that Algorithm~\ref{alg:abk} recovers a superset of size at most $k+\ell$ given a measurement matrix $B$ which is an $(m, n, d, k, 1/2, \ell)$-list RUFF.

\begin{theorem}
Let $\bfx \in \R^n$ be an unknown $k$-sparse vector with $\supp{\bfx} = S^*$. If $B$ is an $(n, m, d, k, 1/2, \ell)$-list RUFF, then Algorithm~\ref{alg:abk} returns $\hat{S}$ such that $S^* \subseteq \hat{S} \subseteq S^* \cup T_{S^*}$.
\end{theorem}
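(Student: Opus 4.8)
The plan is to prove the two inclusions $S^* \subseteq \hat S$ and $\hat S \subseteq S^* \cup T_{S^*}$ separately, using one common observation: $\supp{\bfb} \subseteq \bigcup_{i\in S^*}\supp{B_i}$. This is immediate — if $\bfb_t \ne 0$ then $0 \ne (B\bfx)_t = \sum_{i\in S^*}B_{t,i}\bfx_i$ (since $\supp{\bfx}=S^*$), so $B_{t,i}\ne 0$ for some $i\in S^*$, i.e.\ $t\in\supp{B_i}$. Note also that, as $\bfb = \sign(B\bfx)$, we have $t\in\supp{\bfb}$ iff $(B\bfx)_t\ne 0$.

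For $\hat S\subseteq S^*\cup T_{S^*}$ I would take any $j\notin S^*\cup T_{S^*}$. Then $j\in[n]\setminus S^*$ but the defining inequality of $T_{S^*}$ fails, so $|\supp{B_j}\cap(\bigcup_{i\in S^*}\supp{B_i})| < d/2$; intersecting with $\supp{\bfb}$ and using the observation above gives $|\supp{B_j}\cap\supp{\bfb}| < d/2$, so $j$ fails the threshold test in Algorithm~\ref{alg:abk} and $j\notin\hat S$. Combined with the first condition of \cref{def:list_ruff}, $|T_{S^*}|<\ell$, this also yields the size bound $|\hat S|\le k+\ell$ advertised before the theorem.

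For $S^*\subseteq\hat S$ I would fix $j\in S^*$ and set $S := S^*\setminus\{j\}$, so $|S|\le k-1<k$. Call a row $t$ \emph{bad} if $t\in\supp{B_j}$ but $(B\bfx)_t=0$; since $|\supp{B_j}|=d$, it suffices to show fewer than $d/2$ rows are bad, as then $|\supp{B_j}\cap\supp{\bfb}|>d/2$ and $j$ passes the test. The crucial reduction is that every bad row $t$ lies in $\supp{B_j}\cap(\bigcup_{i\in S}\supp{B_i})$: if $t\in\supp{B_j}$ were disjoint from every $\supp{B_i}$ with $i\in S$, then $(B\bfx)_t = B_{t,j}\bfx_j\ne 0$ (here $B_{t,j}\ne 0$ as $t\in\supp{B_j}$, and $\bfx_j\ne 0$ as $j\in\supp{\bfx}$), contradicting badness. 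Now I would split into two cases. If $j\notin T_S$, the definition of $T_S$ gives $|\supp{B_j}\cap(\bigcup_{i\in S}\supp{B_i})|<d/2$ directly, so fewer than $d/2$ rows are bad. If $j\in T_S$, this set is exactly $L_{S,j}$, and I would invoke the second condition of \cref{def:list_ruff} with the full-support vector $\bfx|_{S\cup\{j\}}$ (note $S\cup\{j\}=S^*$): since $\bfx$ is supported on $S^*$ one checks $A_{S,j}\,\bfx|_{S^*} = (B\bfx)|_{L_{S,j}}$, so that condition states this vector has more than $|L_{S,j}|-d/2$ nonzero coordinates, hence fewer than $d/2$ zero coordinates; as the bad rows form a subset of $\{t\in L_{S,j}:(B\bfx)_t=0\}$, again fewer than $d/2$ are bad. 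In both cases $j\in\hat S$.

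I expect the hard inclusion to be the only genuine obstacle. Within it, the two points requiring care are (i) confining the possibly-zeroed rows of column $j$ to $\supp{B_j}\cap(\bigcup_{i\in S}\supp{B_i})$ with $S=S^*\setminus\{j\}$, which is exactly why column $j$ must play the role of the extra, unrestricted column in $A_{S,j}$; and (ii) the bookkeeping identity $A_{S,j}\,\bfx|_{S^*}=(B\bfx)|_{L_{S,j}}$, which lets the second condition of \cref{def:list_ruff} translate directly into an upper bound on the number of zeros of $\bfb$ among the rows of column $j$. The case distinction on whether $j\in T_S$ is forced only because that condition is stated for indices of $T_S$; when $j\notin T_S$ the needed estimate follows immediately from the definition of $T_S$.
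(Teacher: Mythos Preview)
Your proposal is correct and follows essentially the same approach as the paper's proof: the easy inclusion via $\supp{\bfb}\subseteq\bigcup_{i\in S^*}\supp{B_i}$, and the hard inclusion by setting $S=S^*\setminus\{j\}$, observing that rows of $\supp{B_j}$ outside $\bigcup_{i\in S}\supp{B_i}$ automatically give nonzero $\bfb_t$, then splitting on whether $j\in T_S$ and invoking the second list-RUFF condition via the identity $A_{S,j}\,\bfx|_{S^*}=(B\bfx)|_{L_{S,j}}$. The only cosmetic difference is that the paper phrases the count through a partition $G_1\cup G_2$ of $\supp{B_j}$ and tallies nonzeros, whereas you tally ``bad'' (zero) rows; these are equivalent.
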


\begin{proof}
We first show that $\hat{S} \subseteq S^* \cup T_{S^*}$. We in fact prove the contrapositive, i.e. if $j \notin S^* \cup T_{S^*}$, then $j \notin \hat{S}$. Let $j \in [n] \setminus (S^* \cup T_{S^*})$. By definition of $T_{S^*}$, we know that $\supp{B_j}$ does not intersect $\cup_{i \in S^*} \supp{B_i}$ in too many places, i.e. 
$\lvert \supp{B_j} \cap \left( \cup_{i \in S^*} \supp{B_i} \right) \rvert < \frac{\wt(B_j)}{2}$. 
Consider all the rows $ t \in \supp{B_j} \setminus \left( \cup_{i \in S^*} \supp{B_i} \right)$. Note that for all these rows, $b_t = 0$. Therefore, 
\begin{align*}
\lvert \supp{b} \cap \supp{B_j} \rvert &\le \lvert \supp{B_j} \rvert - \lvert  \supp{B_j} \setminus \left( \cup_{i \in S^*} \supp{B_i} \right) \rvert \\
&= \lvert \supp{B_j} \cap \left( \cup_{i \in S^*} \supp{B_i} \right) \rvert <  \frac{\wt(B_j)}{2}. 
\end{align*}
From Algorithm~\ref{alg:abk}, it then follows that $j \notin \hat{S}$.

To show that every $j \in S^*$ is included in $\hat{S}$, we need to show that for every such $j$, $\lvert \supp{b} \cap \supp{B_j} \rvert >  \frac{ \wt(B_j)}{2}$. This is equivalent to showing that there are not too many zeros in the rows of $b$ corresponding to rows in $\cup_{i \in S^*} \supp{B_i}$. Let $j \in S^*$ be any column in the support of $\bfx$. Let us partition $\supp{B_j}$ into two groups. 
Let $S^*_{j} := S^* \setminus \{j\}$. 
Define 
\begin{align*}
G_1:= &\{ t \in \supp{B_j} \cap \left( \cup_{i \in S^*_{j}}  \supp{B_i} \right) \}, \mbox{ and } \\
G_2: = &\supp{B_j} \setminus G_1 = \{ t \in \supp{B_j} \setminus \left( \cup_{i \in S^*_{j}}  \supp{B_i} \right) \}.
\end{align*}
Note that for all $t \in G_2$, $b_t \neq 0$ since $b_t = \bfx_j \cdot B_j(t) \neq 0$ since $j \in \supp{\bfx}$. Therefore, $G_2 \subseteq \supp{\bfb} \cap \supp{B_j}$. We can without loss of generality assume that $j \in T_{S_j^*}$. 
Otherwise, by definition of $T_{S_j^*}$ it follows that 
$\lvert G_2 \rvert > \frac{\wt(B_j)}{2}$, and Algorithm~\ref{alg:abk} includes $j \in \hat{S}$.

We now show that $\bfb_t \neq 0$ for many $t \in G_1$. In particular, we show that $\bfb_t$ is zero for at most $\frac{\wt(B_j)}{2}$ indices in $G_1$. This follows from the property of the list-RUFF. 
Consider the following submatrix of $B$,
$A_{S_j^*, j} := B[G_1,S^*] = B \left[ L_{ S_j^*, j} : S_j^* \cup \{j\} \right]$.
Since $j \in T_{S_j^*}$, $|G_1| > \wt(B_j) / 2$, and therefore $A_{S_j^*, j}$ has at least $\wt(B_j) / 2$ rows, and at most $k$ columns.  

From the definition of list-RUFF, we know that for any $\bfz \in \R^{|S^*|}$ with full support, $\wt(A_{S_j^*, j}~\bfz) > |L_{S, j}| - \frac12 \wt(B_j) = |G_1| - \frac12 \wt(B_j)$. Therefore, for $\bfx$ that is supported on $S^*$, $\bfb_t \neq 0$ for at least  $|G_1| - \frac12 \wt(B_j)$ indices in $G_1$. 

Combining these observations, it follows that 
\begin{align*}
\lvert \supp{\bfb} \cap \supp{B_j} \rvert > |G_1| - \frac12 \wt(B_j) + |G_2| = \frac12 \wt(B_j).
\end{align*}
Therefore the fact that $j \in \hat{S}$ follows from Algorithm~\ref{alg:abk}.

 \end{proof}
 
 In light of this, a possible direction for improving the current upper bound for universal approximate recovery of real vectors would be to show the existence of $(m, n, d, k, 1/2, \calO{k})$-list RUFFs with $m = o(k^2 \log(\frac{n}{k}))$. This would immediately yield a measurement matrix with $\calO{m + \frac{k}{\epsilon} \log \frac{k}{\epsilon}}$ rows that could be used for universal $\epsilon$-approximate recovery. We show below via a simple probabilistic construction that matrices satisfying the first property in \cref{def:list_ruff} with $m = O(k \log n)$ and $\ell = O(k)$ exist, but leave open the question of whether $O(k \log n)$ rows suffices also for the second property, or whether $O(k^2 \log n)$ rows are necessary.
 
 \begin{theorem}
 \label{thm:list_ruff_prop1}
 There exist matrices $B \in \R^{m \times n}$ satisfying $\wt(B_j) = \frac{m}{k}$ for all columns $B_j$ and for every subset of columns $S \subseteq [n]$, $|S| \leq k$, we have $|T_S| < \ell$, under the assumptions that $m = \Omega(k \log n)$, $k = o(n / \log(n))$, and $\ell = \Omega(k)$.
 \end{theorem}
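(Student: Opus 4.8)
The plan is a routine probabilistic construction; since the first list-RUFF property depends only on the support pattern of $B$, I would take $B$ to be a random $m\times n$ matrix whose entries are i.i.d.\ $\mathrm{Bernoulli}$, each nonzero with probability $p=\Theta(1/k)$ (concretely $p=\tfrac1{2k}$; the value of the constant matters below). Every column then has weight concentrated around $pm=\Theta(m/k)$, and a union bound over the $n$ columns makes all column weights lie in $\Theta(m/k)$ with high probability, matching the stated $\wt(B_j)=\tfrac mk$ up to the constant. It then suffices to bound $\Pr[|T_S|\ge\ell]$ for each fixed $S$ with $|S|\le k$ and union bound over all such $S$; since $U_S:=\bigcup_{i\in S}\supp{B_i}$ only grows with $S$, the case $|S|=k$ is the worst, so I focus there.

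Fix $S$ with $|S|=k$. The first step is to control the coverage $|U_S|=\sum_{t=1}^m \mathbf 1[t\in U_S]$. For a Bernoulli matrix these indicators are \emph{independent} across rows $t$, since $\mathbf 1[t\in U_S]$ depends only on the entries $\{B_{t,i}\}_{i\in S}$ and distinct rows use disjoint entries; each has mean $1-(1-p)^k\to 1-e^{-pk}<\tfrac12$ for our choice of $p$. A Chernoff bound gives $|U_S|\le\gamma m$ for some absolute $\gamma<\tfrac12$ except with probability $e^{-\Omega(m)}$; call this good event $\mathcal G_S$, and note it depends only on the columns of $S$.

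Conditioning on $\mathcal G_S$, consider a column $j\notin S$. As $B_j$ is independent of the columns in $S$, it is independent of $U_S$, so $|\supp{B_j}\cap U_S|$ is a sum of $|U_S|\le\gamma m$ independent $\mathrm{Bernoulli}(p)$'s with mean $\le\gamma pm$, while $\wt(B_j)\sim\mathrm{Bin}(m,p)$ has mean $pm$; since $\gamma<\tfrac12$, a two-sided Chernoff bound yields $\Pr[\,|\supp{B_j}\cap U_S|\ge\tfrac12\wt(B_j)\mid\mathcal G_S\,]\le e^{-\Omega(pm)}=e^{-\Omega(m/k)}$. Because $m=\Omega(k\log n)$, this is at most $n^{-C}$ with $C$ as large as we like (by taking the hidden constant in $m=\Omega(k\log n)$ large). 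The events $\{j\in T_S\}$, $j\notin S$, are mutually independent given $\mathcal G_S$ (disjoint columns), so $\Pr[|T_S|\ge\ell\mid\mathcal G_S]\le\binom n\ell (n^{-C})^\ell\le n^{-(C-1)\ell}$. Hence $\Pr[|T_S|\ge\ell]\le e^{-\Omega(m)}+n^{-(C-1)\ell}$, and a union bound over the at most $n^{k+1}$ subsets $S$ with $|S|\le k$ gives failure probability at most $n^{k+1}e^{-\Omega(m)}+n^{k+1-(C-1)\ell}$; both terms tend to $0$ — the first since $m=\Omega(k\log n)$ with a large enough constant, the second since $\ell=\Omega(k)$ and $C$ is large (the hypothesis $k=o(n/\log n)$ keeps $m$ and all parameters in the regime where these estimates apply). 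A suitable $B$ therefore exists.

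The only delicate point — and the real content of the hypotheses — is the bookkeeping of constants: $p=c/k$ must be chosen with $c$ small enough that the $k$-fold union $U_S$ misses a constant fraction of the rows (so a fresh column lands in $T_S$ only with probability $e^{-\Omega(m/k)}$), and then each of the two failure terms must beat the $n^{\Theta(k)}$-size union over subsets $S$, which is exactly what forces $m=\Omega(k\log n)$ and $\ell=\Omega(k)$ with sufficiently large hidden constants. It is worth noting that one cannot literally take $\wt(B_j)=m/k$: the union of $k$ uniformly random $(m/k)$-subsets of $[m]$ covers a $(1-e^{-1})$-fraction in expectation, which exceeds $\tfrac12$, so almost every column would land in $T_S$; the statement should be understood with $\wt(B_j)=\Theta(m/k)$.
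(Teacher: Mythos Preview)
Your argument is correct, and in fact more careful than the paper's own proof. Both proofs follow the same probabilistic-construction template---pick random columns, bound $\Pr[j\in T_S]$ for a single outside column, then union bound over the $\binom n\ell$ ways to choose the offending $\ell$ columns and over all $S$---but they differ in two respects. The paper draws each column as a uniformly random $d$-subset of $[m]$ with $d=m/k$ exactly and bounds $\Pr[j\in T_S]$ directly, whereas you use i.i.d.\ Bernoulli$(1/2k)$ entries and first condition on the event $\{|U_S|\le\gamma m\}$ for some $\gamma<\tfrac12$; the Bernoulli model makes the row indicators for $U_S$ genuinely independent, and the conditioning cleanly separates the ``coverage'' randomness from the fresh column's randomness.

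Your closing remark is the important point and deserves emphasis: with column weight literally $m/k$, the union $U_S$ of $k$ random columns covers a $(1-(1-1/k)^k)\approx 1-1/e>\tfrac12$ fraction of $[m]$ in expectation, so a fresh column's overlap with $U_S$ has mean about $(1-1/e)d>d/2$ and one cannot hope to push $\Pr[j\in T_S]$ below a constant. The paper's bound $\Pr[X_j=1]\le kd/n$ (with $n$ the number of columns) does not follow from the stated reasoning---the supports live in $[m]$, so the natural bound is $kd/m=1$, which is vacuous---and the downstream Chernoff calculation inherits this issue. Your choice $p=1/(2k)$ (equivalently, constant-weight $d\approx m/(2k)$) is precisely what is needed to make the coverage drop below $\tfrac12$ and the tail bound bite; reading $\wt(B_j)=\Theta(m/k)$ with a sufficiently small leading constant is the right interpretation of the theorem.
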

 \begin{proof}
 We will construct $B$ by drawing a set $S_j \subseteq [m]$ of size $d = \frac{m}{k}$ uniformly at random among all such sets for each column of $B$. If $i \in S_j$ then we set the $i$th entry of $B_j$ to 1, otherwise 0. Now we must show that with probability less than 1 there does not exist any subset $S$ of at most $k$ columns of $B$ with $|T_S| \geq \ell = \Omega(k)$.
 
 Recall that by definition,
 \begin{equation*}
 T_S = \set{j \in [n] \setminus S : |\supp{B_j} \cap (\cup_{i \in S} \supp{B_i})| \geq \frac{1}{2} \wt(B_j)},
 \end{equation*}
 or in other words, $T_S$ is the set of ``confusable'' columns for the subset $S$ of columns of $B$. The event that we wish to avoid is that there exists a set $S$ of $k + \ell$ ``bad'' columns for which the union of the supports of a subset $S' \subseteq S$ of $k$ of those columns has a large intersection with the supports of all of the remaining $\ell$ columns. Since the columns of $B$ are all chosen independently, we have
 \begin{eqnarray}
 && \Pr[B \textrm{ has a bad set $S$ of $k + \ell$ columns}] \\
 &\leq& \binom{n}{\ell + k} \Pr[S \subseteq \Col(B) \textrm{ is a bad set of $k+\ell$ columns}] \\
 &\leq& \binom{n}{\ell + k} \binom{\ell + k}{k} \Pr[\textrm{ for all $\ell$ columns $B_i$ in $S \setminus S'$, $i \in T_{S'}$}] \\
 &\leq& \binom{n}{\ell + k} \binom{\ell + k}{k} (\Pr[i \in T_{S'}])^{\ell}.
 \end{eqnarray}
 
Now we can assume we have a fixed set $S'$ of $k$ columns and another fixed column $B_i$, and we want to upper bound the probability that more than half the $d = \frac{m}{k}$ nonzero entries of $B_i$ lie in $\cup_{j \in S'} \supp{B_j}$. Let $X_j$ be the binary random variable that is equal to 1 if and only if the $j$th entry of $B_i$ is nonzero and lies in $\cup_{j \in S'} \supp{B_j}$. Since every column has weight exactly $d$, $|\cup_{j \in S'} \supp{B_j}| \leq kd$, thus for any $j$ $\Pr[X_j = 1] \leq \frac{kd}{n}.$ Then by linearity of expectation we conclude that
 \begin{equation}
 E[\sum_{j=1}^m X_j] = d \Pr[X_j = 1] \leq \frac{kd^2}{n}.
 \end{equation}
 
 While the $X_j$ are not independent, if some $X_j = 1$ then it is less likely that a different random variable $X_{j'} = 1$ as there are less coordinates remaining in $\cup_{j \in S'} \supp{B_j}$. Since the $X_j$ are negatively correlated we can apply a Chernoff bound:
 \begin{equation}
 \Pr[\sum_{j=1}^m X_j \geq \frac{n}{2m} E[\sum_{j=1}^m X_j]] < \left(\frac{e^{(n/2m) - 1}}{(n / 2m)^{n / 2m}}\right)^{m^2 / nk} < \left(\frac{2em}{n}\right)^{m / 2k}.
 \end{equation}
Note that
 \begin{equation}
 \frac{n}{2m} E[\sum_{j=1}^m X_j] \leq \frac{n}{2m} \cdot \frac{kd^2}{n} = \frac{d}{2},
 \end{equation}
so in order for the sum of the $X_j$ to exceed $\frac{d}{2}$ (which would mean the corresponding fixed column has large overlap with the union of the set of $k$ columns), it must also exceed $\frac{n}{2m} E[\sum_{j=1}^m X_j]$.
 
 Combining everything above,
 \begin{eqnarray}
 && \Pr[B \textrm{ has a bad set $S$ of $k + \ell$ columns}] \\
&\leq& \binom{n}{\ell + k} \binom{\ell + k}{k} (\Pr[B_i \in T_{S'}])^{\ell} \\
&\leq& \binom{n}{\ell + k} \binom{\ell + k}{k} \left(\frac{2em}{n}\right)^{(\ell m) / (2k)} \\
&\leq& \left(\frac{ne}{k+\ell}\right)^{k+\ell} \left(\frac{(\ell + k)e}{k}\right)^k \left(\frac{2em}{n}\right)^{(\ell m) / (2k)} \\
&\leq& \left(\frac{ne}{k}\right)^{2k + \ell} \left(\frac{2em}{n}\right)^{(\ell m) / (2k)},
\end{eqnarray}
and we can make this final quantity less than 1 by choosing $m = ck \log n$ for an appropriately large constant $c$, using our assumptions that $\ell = \Omega(k)$ and $k = o(n / (\log n))$.
\end{proof}

\end{document}